\begin{document}
\mainmatter              

\title{Fair Allocation Based Soft Load Shedding}
\titlerunning{Fair allocation based soft load shedding}  
%
\author{Sarwan Ali \inst{1} \and
	Haris Mansoor\inst{1} \and
	Imdadullah Khan\inst{1} \and
	Naveed Arshad \inst{1} \and
Safiullah Faizullah \inst{2}  \and
Muhammad Asad Khan \inst{3}
}

\authorrunning{Ali  et al.} 

\institute{Department of Compute Science, Lahore University of Management Sciences (LUMS), Lahore, Pakistan \\
	\email{\{16030030,16060061,imdad.khan,naveedarshad\}@lums.edu.pk}
	\and
	Department of Compute Science, Islamic University, \\
	Madinah, Saudi Arabia \\
	\email{safi@iu.edu.sa}
	\and
	Department of Telecommunication, Hazara University, \\ 
	Mansehra, Pakistan \\
	\email{asadkhan@hu.edu.pk}
}

\maketitle              

\begin{abstract}
	Renewable sources are taking center stage in electricity generation. Due to the intermittent nature of these renewable resources, the problem of the demand-supply gap arises. To solve this problem, several techniques have been proposed in the literature in terms of cost (adding {\em peaker plants}), availability of data ({\em Demand Side Management} ``DSM"), hardware infrastructure ({\em appliance controlling} DSM) and safety ({\em voltage reduction}). However, these solutions are not fair in terms of electricity distribution. In many cases, although the available supply may not match the demand in peak hours, however, the total aggregated demand remains less than the total supply for the whole day. Load shedding (complete blackout) is a commonly used solution to deal with the demand-supply gap, which can cause substantial economic losses. To solve the demand-supply gap problem, we propose a solution called {\em Soft Load Shedding} (SLS), which assigns electricity quota to each household in a fair way. We measure the fairness of SLS by defining a function for household satisfaction level. We model the household utilities by parametric function and formulate the problem of SLS as a social welfare problem. We also consider revenue generated from the fair allocation as a performance measure. To evaluate our approach, extensive experiments have been performed on both synthetic and real-world datasets, and our model is compared with several baselines to show its effectiveness in terms of fair allocation and revenue generation.
	
\end{abstract}

\keywords{Soft Load Shedding, Fair Allocation, Max Min Fairness, Alpha Fair Allocation, Demand Response}

\section{Introduction}
Many countries have plans to shift to renewable sources of electricity by 2050 \cite{cosic2011towards,article}. This is mainly due to environmental and cost-related problems with fossil fuel-based power plants. While renewable sources (such as solar and wind) provide an environmentally friendly solution to the energy demand, they also create challenges in the electricity distribution system. The problems mostly originate from the variable and stochastic nature of renewable sources, which introduces the demand-supply gap of electricity \cite{lusis2017short}.

Utility companies use different approaches to handle the demand-supply gap. The most commonly used method is to use fossil fuel-based peaker plants. However, these stand-by plants are incredibly costly to operate \cite{vardakas2015survey}. {\em Real-Time Pricing} (RTP) is another method to match demand with supply, where the price of electricity changes in real-time. At peak load hours, the price is high, which restrain consumers/households from using excessive electricity \cite{chen2010two,dutta2017literature}. However, real-time pricing schemes have a rebound effect when the price falls, which induces artificial peaks. Demand-side management (DSM) techniques have been proposed to reduce the total load or at least partially shift it to non-peak hours. In some DSM techniques, the utility companies must have direct control over consumers' appliances. This is not only difficult to practically implement but also illegal in some countries. Methods like electricity storage \cite{roberts2011role,mohd2008challenges}, where the utility maintains a reservoir of excess energy produced during non-peak hours for peak time usage, is costly. {\em Electricity curtailment} programs \cite{aalami2010demand}, that provide (financial) incentives to consumers to reduce electricity consumption does not provide any reduction guarantee from the consumers. Varying electricity voltage \cite{craciun2009new} also has severe limitations as it affects stability and may lead to failure of costly appliances. In the worst-case scenario, utility companies have to perform load shedding (complete blackout) in order to deal with the demand-supply gap.

Electricity loads are broadly of two types, flexible and inflexible. Flexible loads can be shifted to another time window e.g. some industrial units can be operated at different hours of the day. Inflexible loads, on the other hand, are those where the activity can only take place at a specific time. The load of residential consumers is semi-flexible, i.e. while some appliances are crucial to be run at a specific time, others can be scheduled at a different time \cite{aslam2018soft}. While exploiting this {\em demand elasticity}, we purpose a soft load shedding (\textsc{SLS}) approach based on allocating a limited quota of electricity to each consumer during peak hours. Thus, when the total demand exceeds the available supply, the \textsc{SLS} scheme will allow consumers to use only essential appliances and keep the total load within the supply limit. This will let consumers play an active role in managing demand by prioritizing their loads and exercising their maximum flexibility while retaining control of their appliances.

With {\em Advanced Metering Infrastructure} (AMI) that has the capability of {\em threshold metering}, implementation of a \textsc{SLS} scheme is feasible. These smart meters can be programmed remotely to limit supply within a fixed period to a certain quota \cite{dutta2017literature,vardakas2015survey}. Note that the implementation of \textsc{SLS} requires certain provisions in customers' contracts. In this paper, we assume that such provisions are available and focus only on the data analytics aspects of rolling out such a scheme. 

There are significant data analytics challenges for rolling out the \textsc{SLS} mechanism. Firstly, it requires knowledge of household demands and the available supply at a given time. Several predictive data analytics and forecasting methods have been proposed for these problems \cite{ali2019short,ali2019hour,shi2018deep,gerossier2017probabilistic}. Secondly, with the increasing number of households, obtaining an optimal solution becomes computationally infeasible. One way to deal with this problem is to cluster the consumers based on their demands and use the aggregated demand of a set of households rather than the individual household. Another way is to aggregate the demand of a day ($24$ hours) or week and assign the quota to cluster of household for the whole day or week rather than an individual hour. To cluster the households and aggregate the daily/weekly load, several techniques have been proposed in literature \cite{kell2018segmenting,ali2019short,ali2019hour}.
In this paper, since the number of consumers is small for each dataset, we are not clustering the consumers.

The goal of \textsc{SLS} is to assign a quota of electricity to each household fairly and efficiently. Thus, \textsc{SLS} is a resource allocation problem. Any such a scheme must allocate all available supply (maximum efficiency) and maintain a steady revenue to the supplier \cite{walsh2015challenges}. A significant aspect of the resource allocation problem is to distribute electricity among households in such a way so that each household gets a fair amount of electricity \cite{lan2010axiomatic}. 
We have modeled the households' utilities by a parametric function ($\alpha$-fair) and formulated \textsc{SLS} problem as a social welfare maximization problem. Our parametric solution encompasses many well-known fairness notions as its special cases. We have also studied the effect of $\alpha$ on revenue generation considering block rate pricing. The parameter $\alpha$ can be tuned to find the optimal trade-off between fairness and generated revenue. We performed experiments on several datasets to extensively evaluate our method. We compare our results with several baselines and show that our model is effective in terms of increasing consumers satisfaction level as well as increasing the revenue for the utility companies at the same time. Note that throughout the paper, we are using the ``consumer" and ``household" word interchangeably.

The rest of the paper is organized as follows. We review some related work on demand-side management and fair allocation in Section \ref{relatedwork}. In Section \ref{Proposed_methodology}, we formulate the problem of \textsc{SLS}. We describe our approach towards solving the SLS problem in Section \ref{Solution}. In Section \ref{Experiments}, we describe experimental setup and dataset description while results and their comparisons are given in Section \ref{results_and_comparison}. The paper is finally concluded in Section \ref{Section_Conclusion}.

\section{Related Work}\label{relatedwork}
The problem of fair allocation of resources is deeply explored in different literatures related to computer sciences, communication systems, economics, game theory and social sciences \cite{jain1984quantitative,bredel2009understanding,bonald2001impact}. The mathematical foundation of $\alpha$ fair utility functions is formulated in \cite{lan2010axiomatic} in which a general form of fairness measure is presented, which is derived from five axioms (continuity, homogeneity, asymptotic saturation, irrelevance of partition, and monotonicity). 
A study for the allocation of rate and charging for a communication network has been carried out in \cite{kelly1997charging}. In this study, optimization problem is proposed supposing elastic traffic. The equilibrium and fairness criteria of problem are also explored. 
An online version of fair allocation of food for charity purpose is formulated in \cite{aleksandrov2015online}. The idea to use computation to increase both fairness and efficiency is presented in \cite{toby2015}. 
The problem of fair allocation, pareto optimality and efficiency are also well explained in \cite{chiang2012networked} with respect to communication networks. 

Although there is a vast amount of literature on topic related to resource allocation and fairness in different fields, but there is no solid work in this field related to fair and effective electricity allocation. The problem of soft load shedding is proposed in \cite{aslam2018soft}. However, the solution provided is based on qualitative reasoning, and no mathematical justification is provided. The notion of fairness is also not mentioned in the paper. Mansoor et al. in \cite{mansoor2020dr} studies a market model for Demand-Response (DR) using block rate pricing and propose a distributed algorithm to find the optimal pricing for each block
and the load. However, in their proposed method, both customers and the utility have to actively participate, which can make the system difficult to implement in real world scenario.
Chandan et al. in \cite{chandan2014idr} purposed a (DR) program controlled from utility that maximizes user convenience. However, this approach is possible with deep understanding of customers appliances usage pattern. Such kind of data is not available for all customers at utility scale. Our method, however, only require load data of consumers. In \cite{bashir2015delivering} a direct load control method that is capable to enforce several user defined low power states is presented. It directly controls the appliances of the house to manage the load. This method is only applicable if all appliances have the remote control capability. Secondly, implementing this technique on utility scale involves control of multi-million devices, which brings huge cost. Similarly, a queuing based energy management system for residential consumers is proposed by \cite{liu2016queuing}, which also require direct load control from the utility companies. 
However, as explained earlier it very costly to control these devices for million of customers. 

An efficient DR system is proposed in \cite{bashir2015aashiyana}, in which the authors propose a solution to match the demand with supply while trying to avoid complete blackout for majority of households and maximizing the satisfaction level of consumers.
An algorithm for fair load shedding scheme is proposed in \cite{oluwasuji2018algorithms}, in which authors divide the customers into groups so that the total electricity shortfall is equal to the total demand of each group. Then one group is selected to perform load shedding. However, their method performs complete blackout for a certain group, which can cause discomfort for the consumers.

Although a few \textsc{SLS} schemes have been proposed in the literature, however, no discussion regarding the fair allocation and revenue generation in the SLS schemes is discussed. We formulate the problem of \textsc{SLS} as a social welfare optimization problem. We introduced a parametric notion of fairness to the solution and studied the revenue generated under different allocations.

\section{Problem Formulation} \label{Proposed_methodology}
In this section we describe the requirements for SLS problem. Consider a set of $ N $ users where each user can be an individual customer or a household. 
These households are served by one power company. The total electricity supply available to the company is $S$. 
For every household $ i \in N $, 
there is a maximum power demand denoted by $ d_{i} $. Moreover, the sum of available demands is greater than the supply.
\begin{equation}
	\sum_{i=1}^{N} d_i \geq S
\end{equation}
To solve the problem of SLS, the power company has to assign each household an allocation of electricity $x_{i}$. Moreover, the allocation problem should clear the market i.e the aggregated allocations is equal to the supply. 
\begin{equation}
	0 \leq x_{i} \leq d_{i}, \ \ i \in N
\end{equation}
\begin{equation}
	\sum_{i=1}^{N} x_{i}=S
\end{equation}

In this paper, we have considered a general notion of fairness called $\alpha$-fair. An allocation $x_i^* $ is $\alpha$ fair if for any other allocation $x_i $, we obtained the following inequality from \cite{kelly1997charging} 
\begin{equation}\label{eq:16}
	\centering
	\sum_{i^=1}^{N}  \frac{x_{i} - x_i^{* \alpha}}{ x_i^{* \alpha}} \leq 0
\end{equation}  
where $\alpha \in [0,\infty]$. Different values of alpha produce different fairness measures with varying efficiency \cite{buzna2017controlling}. When $\alpha=0$ the efficiency (throughput) of the solution is maximum, and it favors larger allocations. When $\alpha=1$, the fairness measure becomes the most popular {\em proportional fair} \cite{altman2008generalized}. Proportional fair solution favors smaller allocations but less emphatically. As we increase $\alpha$ the fairness measure shifts favors from larger allocations to smaller one. An important characteristic of $\alpha$-fairness is that as we increase $\alpha$ the total throughput ($\sum_{i=1}^{N}x_i$) decreases. However, in SLS problem the efficiency is always maximum due to market clearance constraint ($\sum_{i=1}^{N} x_{i}=S$).

\section{SLS Problem}\label{Solution}
We take an optimization approach towards the SLS problem. To formulate the SLS optimization problem, we associate a utility function $U_i(x_{i})$ to $i^{th}$ consumer for $x_{i}$ consumption of electricity. The utility function measures the importance/satisfaction of the household as a function of the consumed electricity. We assume that the utility $U_i(x_{i})$ is an increasing, strictly concave and continuously differentiable, function of $x_{i}$ over $x_{i} \geq 0$ \cite{kelly1997charging}. Assume further that the utilities are additive , so that the total utility of all consumers is following:
\begin{equation}
	\sum\limits_{i=1}^{N} U_i(x_{i})
\end{equation}

To find optimal electricity quota for each consumer, the utility company has to solve the following social welfare optimization problem.
\begin{equation}\label{eq:4}
	\begin{array}{ll@{}ll}
		\text{maximize:}  & \displaystyle\sum\limits_{i=1}^{N} U_i(x_{i}) &\\ \\
		\text{subject to:}\\ 
		&\sum\limits_{i=1}^{N}x_{i} = S\\ \\
		& x_{i} \geq 0, & i \in N \\ \\
		&x_{i} \leq d_{i}, & i \in N \\ \\
	\end{array}
\end{equation}

The SLS problem from Equation \eqref{eq:4} is a convex optimization problem. The objective function is the sum of concave functions and is concave. The feasible region is an intersection of linear equality and inequalities, which is a convex set. First, we prove that solving the SLS problem is equivalent to maximizing the net utility of each household. Since the problem is convex, there is a unique solution, which can be found using the Lagrangian method.

\begin{equation}\label{eq:8}
	L(x,\lambda)=\sum\limits_{i=1}^{N} U_i(x_{i})+\lambda(S-\sum\limits_{i=1}^{N}x_{i})
\end{equation}

Where $\lambda$ is the Lagrange multiplier. By taking partial derivative with respect to $x_{i}$,
\begin{equation}\label{eq:9}
	\begin{aligned} 
		\frac{\partial L}{\partial x_i} & = U'(x_{i})-\lambda, \ \ i \in N 
	\end{aligned}
\end{equation}

Where $\lambda$ can be interpreted as per unit electricity price.
If every household is charged price $\lambda$, and is allowed to freely change the electricity units (demand), then each household wants to solve the following net utility maximization problem.
\begin{equation}\label{eq:5}
	\begin{array}{ll@{}ll}
		\text{maximize:}  & \displaystyle U_i(x_{i})-\lambda x_{i} &\\ \\
		\text{subject to:} \\ 
		& x_{i} \geq 0, & i \in N \\ \\
		&x_{i} \leq d_{i}, & i \in N \\ \\
	\end{array}
\end{equation} 
%
%

\begin{theorem}
	There exists a price variable $\lambda$, such that the solution vector of optimization problem in Equation \eqref{eq:5} also solves the optimization problem for Equation \eqref{eq:4}.
\end{theorem}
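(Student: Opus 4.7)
The plan is to exploit Lagrangian duality: because problem \eqref{eq:4} is convex (strictly concave separable objective over a convex polytope), its KKT conditions are necessary and sufficient, and the multiplier on the single coupling constraint $\sum_i x_i = S$ will serve as the price $\lambda$ that decouples the problem into the per-household subproblems \eqref{eq:5}.

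First I would verify that Slater's condition (or some constraint qualification) holds for \eqref{eq:4}, which is immediate since $\sum_i d_i \geq S$ and the $d_i$'s are strictly positive, so a strictly feasible interior point exists. This guarantees strong duality and the existence of a Lagrange multiplier $\lambda^*$ for the equality constraint, together with multipliers $\mu_i \geq 0$ for $x_i \leq d_i$ and $\nu_i \geq 0$ for $x_i \geq 0$. Writing the full Lagrangian and differentiating with respect to $x_i$ yields the stationarity condition
\begin{equation*}
U_i'(x_i^*) - \lambda^* - \mu_i + \nu_i = 0, \qquad i \in N,
\end{equation*}
together with complementary slackness $\mu_i(x_i^* - d_i) = 0$ and $\nu_i x_i^* = 0$. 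By strict concavity of each $U_i$, the vector $x^*$ satisfying these KKT conditions is the unique global maximizer.

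Next I would set the price to exactly this $\lambda^*$ and consider the per-household problem \eqref{eq:5}. Its objective $U_i(x_i) - \lambda^* x_i$ is strictly concave in $x_i$ on the interval $[0, d_i]$, so it has a unique maximizer $\hat x_i$ characterized by its own KKT conditions, namely the existence of $\hat\mu_i, \hat\nu_i \geq 0$ with
\begin{equation*}
U_i'(\hat x_i) - \lambda^* - \hat\mu_i + \hat\nu_i = 0,
\end{equation*}
plus complementary slackness on the box constraints. These are literally the same scalar conditions that $x_i^*$ satisfies in the global problem, so by uniqueness $\hat x_i = x_i^*$ for every $i$. Hence the vector that solves the $N$ decoupled net-utility maximizations coincides with the solution of \eqref{eq:4}.

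The main obstacle I expect is handling the boundary cases cleanly: a naive derivation $U_i'(x_i) = \lambda$ only holds when $0 < x_i^* < d_i$, and if a household's optimum sits at a box boundary one must carry the multipliers $\mu_i, \nu_i$ through both problems and compare them. A minor additional care is that $\lambda^*$ must be chosen so the aggregate demand response $\sum_i \hat x_i(\lambda^*)$ clears the market at $S$; this is guaranteed precisely because $\lambda^*$ is the dual optimum of the coupling constraint in \eqref{eq:4}, so no separate market-clearing argument is needed beyond invoking strong duality.
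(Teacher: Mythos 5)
Your proposal is correct and follows essentially the same route as the paper: price out the coupling constraint $\sum_i x_i = S$ via its Lagrange multiplier $\lambda$ and observe that the resulting per-household stationarity condition $U_i'(x_i) = \lambda$ is exactly the optimality condition of the decoupled problem \eqref{eq:5}. The only difference is that you carry the multipliers for the box constraints $0 \leq x_i \leq d_i$ through both problems and invoke complementary slackness and strict concavity to handle boundary optima, whereas the paper's Lagrangian \eqref{eq:8} drops those constraints and argues only from the interior condition \eqref{eq:10} --- so your version is a more careful rendering of the same argument, not a different one.
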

\begin{proof}
	
	Since the problem in Equation \eqref{eq:5} is convex, its solution can be found by taking derivative, which is given below.
	\begin{equation}\label{eq:10}
		U'(x_{i})=\lambda, \ \ i \in N \\ 
	\end{equation}
	This is same as equilibrium condition of Equation \eqref{eq:4} as shown in Equation \eqref{eq:9} 
\end{proof}
We define $U_{\alpha}(x_{i})$ as a general class of utility functions whose solution is $\alpha$-fair allocation of electricity. 
The general form of alpha fair utility function is given in Equation \eqref{eq:7}.
\begin{equation} \label{eq:7}
	U_\alpha(x_i)=\begin{cases}
		\dfrac {x_i^{1-\alpha}}{1-\alpha} & \alpha \geq 0,\alpha \neq 1 \\
		log(x_i) & \alpha=1 
	\end{cases}
\end{equation} 
If we put $\alpha=0$ in Equation \eqref{eq:7}, the objective function is equivalent to maximizing the throughput (efficiency) of the problem. Throughput fairness gives priority to larger allocations. If we use $\alpha=1$ the resulting allocation vector is called proportionally fair. Proportional fair electricity allocation is achieved by maximizing the sum of logarithms of received electricity $\sum_{i=1}^{N} log (x_{i})$. A system is called proportional fair if we cannot provide any customer with a larger fraction of electricity without reducing the proportion to those that are receiving a smaller fraction of electricity. Proportional fair gives priority to smaller allocations but less emphatically. Between $\alpha=0$ to $\alpha= \infty$ different fairness criterias originate. As we increase $\alpha$ the allocation priority moves from larger to smaller allocations. 

It is a well know fact that larger $\alpha$ means more fair solution. However, for a system with single supply,  proportional fair ($\alpha = 1$) is same as max-min fair ($\alpha=\infty$) \cite{kelly1997charging}. Generally, as we increase $\alpha$, the fairness of solution increases but the total efficiency/throughput ($\sum_{i=1}^{N}x_i$) decreases \cite{buzna2017controlling}. However, in the problem of SLS, the efficiency (throughput) always remain the same (maximum) due to the market clearance constraint (sum of allocation always equal to the supply) i.e. $\sum_{i=1}^{N}x_i=S$.

Recall that SLS is performed on the flexible loads that can be shifted to some other time (hour). In a scenario where there are not enough flexibility loads available with the majority of the consumers that can be disabled in the critical moment, the available supply will be equally divided among all consumers.

\section{Experimental Setup}\label{Experiments}
In this section, we first discuss dataset statistics for both real-world and synthetic datasets and give our performance metrics. Then we discuss the baseline methods, which we are using for the comparison with our proposed approach.

\subsection{Dataset Description and Performance Measure}\label{Dataset_Description_and_Performance_Measure}
For synthetic datasets, we generate data using Binomial and Uniform distributions with $100$ values (households) for each distribution. Increasing the number of households has no significant difference in terms of performance (not in terms of runtime), that is why we consider $100$ values only (so that results can be computed fast). 

For real world datasets, we use hourly consumption data from Australia \cite{lusis2017short}, Sweden \cite{javed2012forecasting}, and Ireland \cite{irish_dataset}.  
For each dataset, one day is randomly selected and load of $24$ hours are aggregated for that day. The statistics of all datasets after removing consumers with missing values are given in Table \ref{tbl:Dataset}. The aggregated load values of all households for randomly selected day of real world datasets are given in Figure \ref{fig_real_datasets_stats}.

\begin{table}[h!]
	\begin{center}
		\begin{tabular}{c c c c} 
			\hline
			Dataset & No. of Consumers & No. of Hours & Duration \\ [0.5ex] 
			\hline\hline
			Australia & 34 & 26304 &  1-July-2010 to 30-Jun-2013 \\
			\hline
			Sweden & 582 & 17544 & 1-Jan-2004 to 31-Dec-2005 \\ 
			\hline
			Ireland & 707 & 12865 & 14-Jul-2009 to 31-Dec-2010  \\
			[1ex] 
			\hline
		\end{tabular}
		\caption{Statistics of real world datasets}
		\label{tbl:Dataset}
	\end{center}
\end{table}

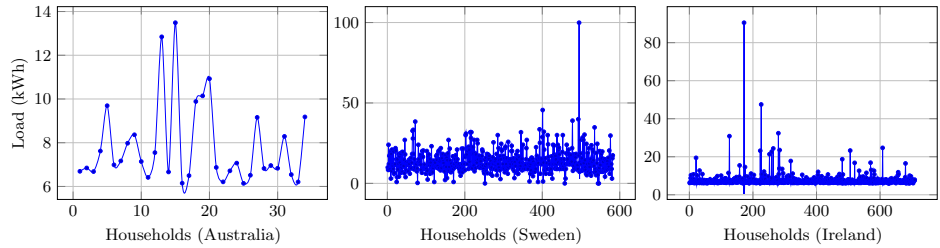
\begin{figure}[h!]
	\centering
	\footnotesize
	\begin{tikzpicture}[scale = 0.7]
	\begin{axis}[title={},
	compat=newest,
	xlabel style={text width=3.5cm, align=center},
	xlabel={{\small Households (Australia)}},
	ylabel={Load (kWh)}, 
	ylabel shift={-3pt},
	height=0.43\columnwidth, width=0.55\columnwidth, grid=major,
	]
	\addplot+[
	mark size=1pt,
	smooth,
	error bars/.cd,
	y fixed,
	y dir=both,
	y explicit
	] table [x={x}, y={y}, col sep=comma] {aus_datasets_stats.csv};
	\end{axis}
	\end{tikzpicture}%
	\begin{tikzpicture}[scale = 0.7]
	\begin{axis}[title={},
	compat=newest,
	xlabel style={text width=3.5cm, align=center},
	xlabel={{\small Households (Sweden)}},
	ylabel shift={-3pt},
	height=0.43\columnwidth, width=0.55\columnwidth, grid=major,
	]
	\addplot+[
	mark size=1pt,
	smooth,
	error bars/.cd,
	y fixed,
	y dir=both,
	y explicit
	] table [x={x}, y={y}, col sep=comma] {swe_datasets_stats.csv};
	\end{axis}
	\end{tikzpicture}%
	\begin{tikzpicture}[scale = 0.7]
	\begin{axis}[title={},
	compat=newest,
	xlabel style={text width=3.5cm, align=center},
	xlabel={{\small Households (Ireland)}},
	ylabel shift={-3pt},
	height=0.43\columnwidth, width=0.55\columnwidth, grid=major,
	]
	\addplot+[
	mark size=1pt,
	smooth,
	error bars/.cd,
	y fixed,
	y dir=both,
	y explicit
	] table [x={x}, y={y}, col sep=comma] {ire_datasets_stats.csv};
	\end{axis}
	\end{tikzpicture}%
	\caption{Aggregated Load of a randomly selected day ($24$ hours)}
	\label{fig_real_datasets_stats}
	
\end{figure}

For experiments, we tried different values of alpha $\alpha \in[0,10000]$. The available supply is varied from $60\%$ to $95\%$ of the total demand. All experiments are performed in Matlab using cvx toolbox on core $i3$ system with $4$GB memory.

We consider two metrics to measure the performance of our approach, $(i)$ Consumer Satisfaction Level (maximizing the utility) and $(ii)$ Revenue generation. Both metrics are described as follow:

\subsubsection{Consumer Satisfaction Level:}
To measure the satisfaction level of households, we use utility function described in \cite{bashir2015aashiyana}. We divide the satisfaction level of households into $5$ levels.
Depending on the percentage of allocation with respect to their demand, households will be placed in one of the $5$ levels.
The utility function is given in Equation \eqref{eq_utility_fun_ashiana}.

\begin{equation} \label{eq_utility_fun_ashiana}
	U(U_{max}, th_{U}, th_{L})=\begin{cases}
		U_{max} & for \ L5 \\	
		th_{U} & for \  L4 \\	
		\frac{th_{U} + th_{L}}{2} & for \  L3 \\	
		th_{L} & for \  L2 \\	
		0 & for \  L1 \\	
	\end{cases}
\end{equation} 
In Equation \eqref{eq_utility_fun_ashiana}, $U_{max}$ is the state where the household has highest satisfaction level ``$L5$" (i.e. household is getting allocation of electricity that is between $76\%$ to $100\%$ of its demand), while $L0$ means complete blackout (i.e. zero allocation). $L1$ to $L4$ are considered as restricted power states. The upper threshold $th_{U}$ and lower threshold $th_{L}$ contains $75\%$ and $25\%$ allocation respectively with respect to the demand of consumers. Our goal is to remove as much households from state $L1$ as possible and shift them to any of the upper states. In ideal scenario, all households should be placed in $L5$ provided that the shortfall constraint is satisfied.

\subsubsection{Revenue Generation:}
We consider revenue generation to measure the performance of our approach because utility companies want to maximize revenue and hence profit. Different utility companies use different pricing strategies like (flat tariff, block rate tariff, time of use tariff, real-time pricing etc.). In this study, we used block rate tariff because it is the most popular one (in most developing countries). Under block rate tariff, the revenue generated not only depends on total throughput but also the distribution of allocation vector.
In block rate tariff, different blocks of energy are charged at different rates. The rate per unit in each block is fixed. The succeeding block prices could be greater or lower than previous blocks. Since, in this study, we have considered a limited supply, increasing block rate pricing is used to discourage prodigious usage of electricity. A three block rate tariff with the increasing price is shown in Figure \ref{block_rate_tariff}.
\begin{figure}[H]
	\begin{center}
		\includegraphics[scale=0.5, page=1]{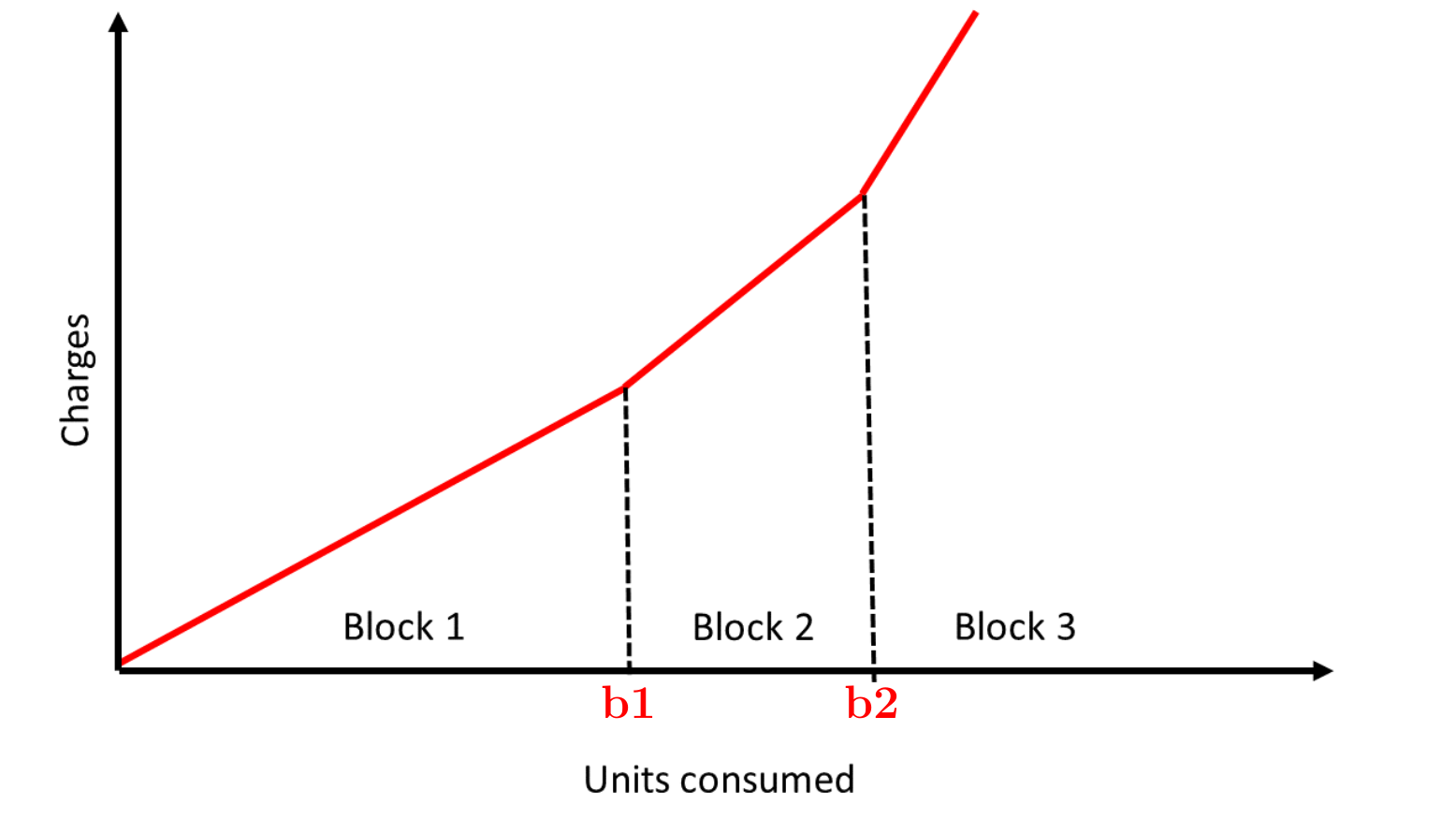}
		\caption{Block rate tariff. In our experiments, we are using two blocks with thresholds $b1$ and $b2$ and fix prices for the blocks that is $p_1=10$ and $p_2=20$}
		\label{block_rate_tariff}
	\end{center}
\end{figure}
The revenue generated from the $i^{th}$ household using block rate tariff with $T$ blocks is given below
\begin{equation} \label{eq:15}
	R_i=\begin{cases}
		p_1x_i & x_i \leq b_1 \\	
		p_1b_1+p_2(x_i-b_1) & b_1 < x_i \leq b_2 \\
		\vdots & \vdots \\
		p_1b_1+p_2(b_2-b_1)+...+p_T(x_i-b_T) & b_T < x_i 
	\end{cases}
\end{equation} 
Where $\{p_1,p_2,\ldots,p_T\}$ is the price associated with $\{block \ 1, block \ 2,\ldots,block \ T\}$. The total revenue generated is $\sum_{i=1}^{N} R_i$. Since $p_T > p_{T-1}... > p_1$, the fairness measure, which gives larger allocation priority will produce more revenue.
However, revenue is not the only objective we are concerned with as we have explained earlier that we also want the allocation to be fair. In this situation, there should be some compromise between revenue maximization and satisfaction of consumers in term of fairness.

For block rate pricing we are using two blocks with $p_1=10$ and $p_2=20$. Since, the revenue generated depends upon the selection of block threshold $b$, we tried different values of $b$ from $10$ to $90$ percentile of upper bound vector $d_i$, for $i \in N $.  

\subsection{Existing Methods Used for Comparison}\label{baseline_methods_section}
There are many simple baseline methods to execute the SLS mechanism. However, these methods are inefficient, or they result in a very unfair allocation. First, we considered a simple mechanism of equally dividing the available supply. This mechanism is called {\em equitable allocation} and defined as $x_i=\frac{S}{N},  i \in N$. However, equitable allocation is not an efficient solution because each household has a variable amount of demand. Allocating equal electricity to each household means that some households will get more electricity than their demand and vice versa.

Another baseline solution is percentage equitable. In this method, each household gets an equal fraction of electricity. For example, if there is an $80\%$ deficit of electricity, then each household will get $20\%$ of their demand. The drawback of this approach is that there are many households that are already using a minimal amount of electricity (only for their basic needs). Reducing their electricity demand will result in a very unfair allocation. Hence we are not using this baseline for comparison in our results.

The second method, which we use as a baseline, is {\em Max-Min fair allocation} \cite{le2005rate}. The primary objective of this algorithm is to maximize the minimum allocation for each household. This algorithm favors households that have smaller demand. The algorithm works as follows: We take the household with the minimum demand ($min(d_i)$, for $i \in N$) among the available set of household and give every household $min(d_i)$ amount of load. Repeat this process until $S=0$. When $S=0$, the quota assigned to each household will be considered as final allocation. Suppose if in a particular scenario, the available supply is not enough to assign the $min(d_i)$ to every household, then the remaining supply is equally divided among all households.

\section{Results and Comparison}\label{results_and_comparison}
In this section, we show the results in terms of consumer satisfaction and revenue generation for our proposed approach (see Equation \ref{eq:4} for our proposed approach) and compare them with the baseline (see Section \ref{baseline_methods_section} for baseline methods) using different datasets. 
\subsection{Results on Synthetic datasets}
Results for synthetic datasets are given in Figure \ref{fig_binomial_results}, which shows the number of consumers in each of the $5$ categories ($L5$, $L4$, $L3$, $L2$, $L1$) for dataset generated using binomial (above row) and uniform (bottom row) distribution for $100$ households and varying the shortfall percentage. We can see that as the shortfall increases, the overall satisfaction level of consumers decreases. However, in comparison with the baselines, our proposed approach satisfies a greater number of consumers. One important behavior to note here is that no approach place any consumer in the category $L1$ (complete blackout). This behavior shows that our approach (along with the baselines) is more effective as compared to the complete load shedding approaches.

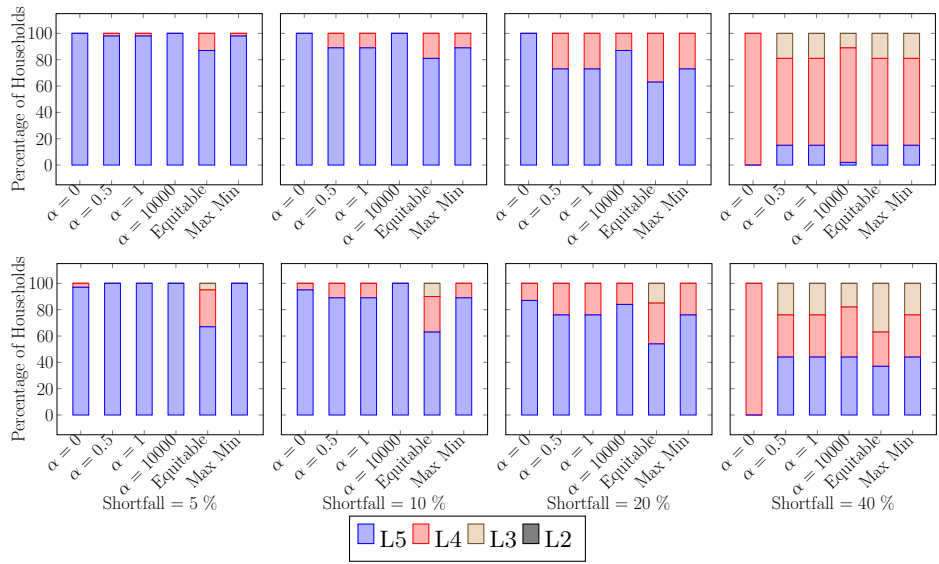
\begin{figure}[h!]
	\centering
	\begin{tikzpicture}[scale=0.40]
	\begin{axis}[
	ybar stacked,
	bar width=15pt,
	enlargelimits=0.15,
	legend columns=-1,
	legend entries={\strut L5, \strut L4, \strut L3, \strut L2},
	legend to name=CombinedLegendBar,
	ylabel={Percentage of Households},
	ylabel near ticks, 
	ylabel shift={-7pt},
	xlabel style={at={(0.5,-10ex)}},
	symbolic x coords={tool1, tool2, tool3, tool4, 
		tool5, tool6},
	xtick=data,
	xticklabels={$\alpha=0$,$\alpha=0.5$,$\alpha=1$,$\alpha=10000$,Equitable,Max Min},
	x tick label style={rotate=45,anchor=east},
	label style={font=\LARGE},
	tick label style={font=\LARGE},
	ymin=0,
	ymax=100,
	]
	\addplot+[ybar] plot coordinates {(tool1,100) (tool2,98) 
		(tool3,98) (tool4,100) (tool5,87) (tool6,98)};
	\addplot+[ybar] plot coordinates {(tool1,0) (tool2,2) 
		(tool3,2) (tool4,0) (tool5,13) (tool6,2)};
	\addplot+[ybar] plot coordinates {(tool1,0) (tool2,0)
		(tool3,0) (tool4,0) (tool5,0) (tool6,0)};
	\addplot+[ybar] plot coordinates {(tool1,0) (tool2,0)
		(tool3,0) (tool4,0) (tool5,0) (tool6,0)};
	\end{axis}
	\end{tikzpicture}%
	\begin{tikzpicture}[scale=0.40]
	\begin{axis}[
	ybar stacked,
	bar width=15pt,
	enlargelimits=0.15,
	legend columns=-1,
	legend entries={\strut L5, \strut L4, \strut L3, \strut L2},
	legend to name=CombinedLegendBar,
	xlabel style={at={(0.5,-10ex)}},
	symbolic x coords={tool1, tool2, tool3, tool4, 
		tool5, tool6},
	xtick=data,
	xticklabels={$\alpha=0$,$\alpha=0.5$,$\alpha=1$,$\alpha=10000$,Equitable,Max Min},
	x tick label style={rotate=45,anchor=east},
	label style={font=\LARGE},
	tick label style={font=\LARGE},
	yticklabels={},
	ymin=0,
	ymax=100,
	]
	\addplot+[ybar] plot coordinates {(tool1,100) (tool2,89) 
		(tool3,89) (tool4,100) (tool5,81) (tool6,89)};
	\addplot+[ybar] plot coordinates {(tool1,0) (tool2,11) 
		(tool3,11) (tool4,0) (tool5,19) (tool6,11)};
	\addplot+[ybar] plot coordinates {(tool1,0) (tool2,0)
		(tool3,0) (tool4,0) (tool5,0) (tool6,0)};
	\addplot+[ybar] plot coordinates {(tool1,0) (tool2,0)
		(tool3,0) (tool4,0) (tool5,0) (tool6,0)};
	\end{axis}
	\end{tikzpicture}%
	\begin{tikzpicture}[scale=0.40]
	\begin{axis}[
	ybar stacked,
	bar width=15pt,
	enlargelimits=0.15,
	legend columns=-1,
	legend entries={\strut L5, \strut L4, \strut L3, \strut L2},
	legend to name=CombinedLegendBar,
	xlabel style={at={(0.5,-10ex)}},
	symbolic x coords={tool1, tool2, tool3, tool4, 
		tool5, tool6},
	xtick=data,
	xticklabels={$\alpha=0$,$\alpha=0.5$,$\alpha=1$,$\alpha=10000$,Equitable,Max Min},
	x tick label style={rotate=45,anchor=east},
	label style={font=\LARGE},
	tick label style={font=\LARGE},
	yticklabels={},
	ymin=0,
	ymax=100,
	]
	\addplot+[ybar] plot coordinates {(tool1,100) (tool2,73) 
		(tool3,73) (tool4,87) (tool5,63) (tool6,73)};
	\addplot+[ybar] plot coordinates {(tool1,0) (tool2,27) 
		(tool3,27) (tool4,13) (tool5,37) (tool6,27)};
	\addplot+[ybar] plot coordinates {(tool1,0) (tool2,0)
		(tool3,0) (tool4,0) (tool5,0) (tool6,0)};
	\addplot+[ybar] plot coordinates {(tool1,0) (tool2,0)
		(tool3,0) (tool4,0) (tool5,0) (tool6,0)};
	\end{axis}
	\end{tikzpicture}%
	\begin{tikzpicture}[scale=0.40]
	\begin{axis}[
	ybar stacked,
	bar width=15pt,
	enlargelimits=0.15,
	legend columns=-1,
	legend entries={\strut L5, \strut L4, \strut L3, \strut L2},
	legend to name=CombinedLegendBar,
	xlabel style={at={(0.5,-10ex)}},
	symbolic x coords={tool1, tool2, tool3, tool4, 
		tool5, tool6},
	xtick=data,
	xticklabels={$\alpha=0$,$\alpha=0.5$,$\alpha=1$,$\alpha=10000$,Equitable,Max Min},
	x tick label style={rotate=45,anchor=east},
	label style={font=\LARGE},
	tick label style={font=\LARGE},
	yticklabels={},
	ymin=0,
	ymax=100,
	]
	\addplot+[ybar] plot coordinates {(tool1,0) (tool2,15) 
		(tool3,15) (tool4,2) (tool5,15) (tool6,15)};
	\addplot+[ybar] plot coordinates {(tool1,100) (tool2,66) 
		(tool3,66) (tool4,87) (tool5,66) (tool6,66)};
	\addplot+[ybar] plot coordinates {(tool1,0) (tool2,19)
		(tool3,19) (tool4,11) (tool5,19) (tool6,19)};
	\addplot+[ybar] plot coordinates {(tool1,0) (tool2,0)
		(tool3,0) (tool4,0) (tool5,0) (tool6,0)};
	\end{axis}
	\end{tikzpicture}%
	\\
	\begin{tikzpicture}[scale=0.40]
	\begin{axis}[
	ybar stacked,
	bar width=15pt,
	enlargelimits=0.15,
	legend columns=-1,
	legend entries={\strut L5, \strut L4, \strut L3, \strut L2},
	legend to name=CombinedLegendBar,
	ylabel={Percentage of Households},
	xlabel = {Shortfall = 5 \%},
	xlabel style={at={(0.5,-10ex)}},
	symbolic x coords={tool1, tool2, tool3, tool4, 
		tool5, tool6},
	xtick=data,
	xticklabels={$\alpha=0$,$\alpha=0.5$,$\alpha=1$,$\alpha=10000$,Equitable,Max Min},
	x tick label style={rotate=45,anchor=east},
	label style={font=\LARGE},
	tick label style={font=\LARGE},
	ymin=0,
	ymax=100,
	]
	\addplot+[ybar] plot coordinates {(tool1,97) (tool2,100) 
		(tool3,100) (tool4,100) (tool5,67) (tool6,100)};
	\addplot+[ybar] plot coordinates {(tool1,3) (tool2,0) 
		(tool3,0) (tool4,0) (tool5,28) (tool6,0)};
	\addplot+[ybar] plot coordinates {(tool1,0) (tool2,0)
		(tool3,0) (tool4,0) (tool5,5) (tool6,0)};
	\addplot+[ybar] plot coordinates {(tool1,0) (tool2,0)
		(tool3,0) (tool4,0) (tool5,0) (tool6,0)};
	\end{axis}
	\end{tikzpicture}%
	\begin{tikzpicture}[scale=0.40]
	\begin{axis}[
	ybar stacked,
	bar width=15pt,
	enlargelimits=0.15,
	legend columns=-1,
	legend entries={\strut L5, \strut L4, \strut L3, \strut L2},
	legend to name=CombinedLegendBar,
	xlabel = {Shortfall = 10 \%},
	xlabel style={at={(0.5,-10ex)}},
	symbolic x coords={tool1, tool2, tool3, tool4, 
		tool5, tool6},
	xtick=data,
	xticklabels={$\alpha=0$,$\alpha=0.5$,$\alpha=1$,$\alpha=10000$,Equitable,Max Min},
	x tick label style={rotate=45,anchor=east},
	label style={font=\LARGE},
	tick label style={font=\LARGE},
	yticklabels={},
	ymin=0,
	ymax=100,
	]
	\addplot+[ybar] plot coordinates {(tool1,95) (tool2,89) 
		(tool3,89) (tool4,100) (tool5,63) (tool6,89)};
	\addplot+[ybar] plot coordinates {(tool1,5) (tool2,11) 
		(tool3,11) (tool4,0) (tool5,27) (tool6,11)};
	\addplot+[ybar] plot coordinates {(tool1,0) (tool2,0)
		(tool3,0) (tool4,0) (tool5,10) (tool6,0)};
	\addplot+[ybar] plot coordinates {(tool1,0) (tool2,0)
		(tool3,0) (tool4,0) (tool5,0) (tool6,0)};
	\end{axis}
	\end{tikzpicture}%
	\begin{tikzpicture}[scale=0.40]
	\begin{axis}[
	ybar stacked,
	bar width=15pt,
	enlargelimits=0.15,
	legend columns=-1,
	legend entries={\strut L5, \strut L4, \strut L3, \strut L2},
	legend to name=CombinedLegendBar,
	xlabel = {Shortfall = 20 \%},
	xlabel style={at={(0.5,-10ex)}},
	symbolic x coords={tool1, tool2, tool3, tool4, 
		tool5, tool6},
	xtick=data,
	xticklabels={$\alpha=0$,$\alpha=0.5$,$\alpha=1$,$\alpha=10000$,Equitable,Max Min},
	x tick label style={rotate=45,anchor=east},
	label style={font=\LARGE},
	tick label style={font=\LARGE},
	yticklabels={},
	ymin=0,
	ymax=100,
	]
	\addplot+[ybar] plot coordinates {(tool1,87) (tool2,76) 
		(tool3,76) (tool4,84) (tool5,54) (tool6,76)};
	\addplot+[ybar] plot coordinates {(tool1,13) (tool2,24) 
		(tool3,24) (tool4,16) (tool5,31) (tool6,24)};
	\addplot+[ybar] plot coordinates {(tool1,0) (tool2,0)
		(tool3,0) (tool4,0) (tool5,15) (tool6,0)};
	\addplot+[ybar] plot coordinates {(tool1,0) (tool2,0)
		(tool3,0) (tool4,0) (tool5,0) (tool6,0)};
	\end{axis}
	\end{tikzpicture}%
	\begin{tikzpicture}[scale=0.40]
	\begin{axis}[
	ybar stacked,
	bar width=15pt,
	enlargelimits=0.15,
	legend columns=-1,
	legend entries={\strut L5, \strut L4, \strut L3, \strut L2},
	legend to name=CombinedLegendBar,
	xlabel = {Shortfall = 40 \%},
	xlabel style={at={(0.5,-10ex)}},
	symbolic x coords={tool1, tool2, tool3, tool4, 
		tool5, tool6},
	xtick=data,
	xticklabels={$\alpha=0$,$\alpha=0.5$,$\alpha=1$,$\alpha=10000$,Equitable,Max Min},
	x tick label style={rotate=45,anchor=east},
	label style={font=\LARGE},
	tick label style={font=\LARGE},
	yticklabels={},
	ymin=0,
	ymax=100,
	]
	\addplot+[ybar] plot coordinates {(tool1,0) (tool2,44) 
		(tool3,44) (tool4,44) (tool5,37) (tool6,44)};
	\addplot+[ybar] plot coordinates {(tool1,100) (tool2,32) 
		(tool3,32) (tool4,38) (tool5,26) (tool6,32)};
	\addplot+[ybar] plot coordinates {(tool1,0) (tool2,24)
		(tool3,24) (tool4,18) (tool5,37) (tool6,24)};
	\addplot+[ybar] plot coordinates {(tool1,0) (tool2,0)
		(tool3,0) (tool4,0) (tool5,0) (tool6,0)};
	\end{axis}
	\end{tikzpicture}
	\ref{CombinedLegendBar}
	\caption{Comparison (using stacked bar plot) of our proposed approach using different values of $\alpha$ (see Equation \ref{eq:4} for our proposed approach) with max-min fair and equitable allocation using data generated from binomial (top row) and uniform (bottom row) distribution for $100$ households. Figure is best seen in color}
	\label{fig_binomial_results}
\end{figure}

\subsection{Runtime Analysis}
With the increasing number of households, the runtime of our proposed approach remains almost linear. Figure \ref{fig_convex_runtime} shows the time (in seconds) of our approach with increasing number of households using dataset generated from binomial distribution (see Section \ref{Dataset_Description_and_Performance_Measure} for detail regarding datasets).

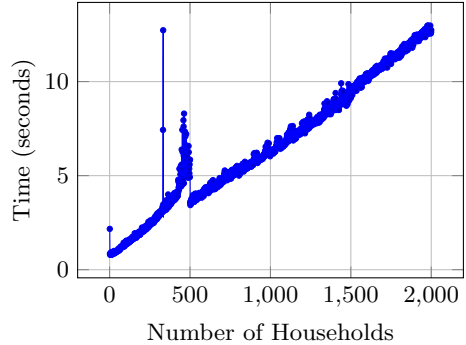
\begin{figure}[h!]
	\centering
	\footnotesize
	\begin{tikzpicture}
	\begin{axis}[title={},
	compat=newest,
	xlabel style={text width=3.5cm, align=center},
	xlabel={{\small Number of Households}},
	ylabel={Time (seconds)}, 
	ylabel shift={-3pt},
	height=0.43\columnwidth, width=0.55\columnwidth, grid=major,
	]
	\addplot+[
	mark size=1pt,
	smooth,
	error bars/.cd,
	y fixed,
	y dir=both,
	y explicit
	] table [x={x}, y={y}, col sep=comma] {runtime_from_customers_convex_final.csv};
	\end{axis}
	\end{tikzpicture}%
	\caption{Runtime in seconds for data generated from binomial distribution with alpha = $2$ and shortfall of $20 \%$}
	\label{fig_convex_runtime}
\end{figure}

\subsection{Results on Real-world datasets}
The results for real-world datasets, namely Sweden, Australia, and Ireland are given in Figure \ref{fig_sweden_results}. We observed similar behavior for real-world datasets as compared to the synthetic dataset.

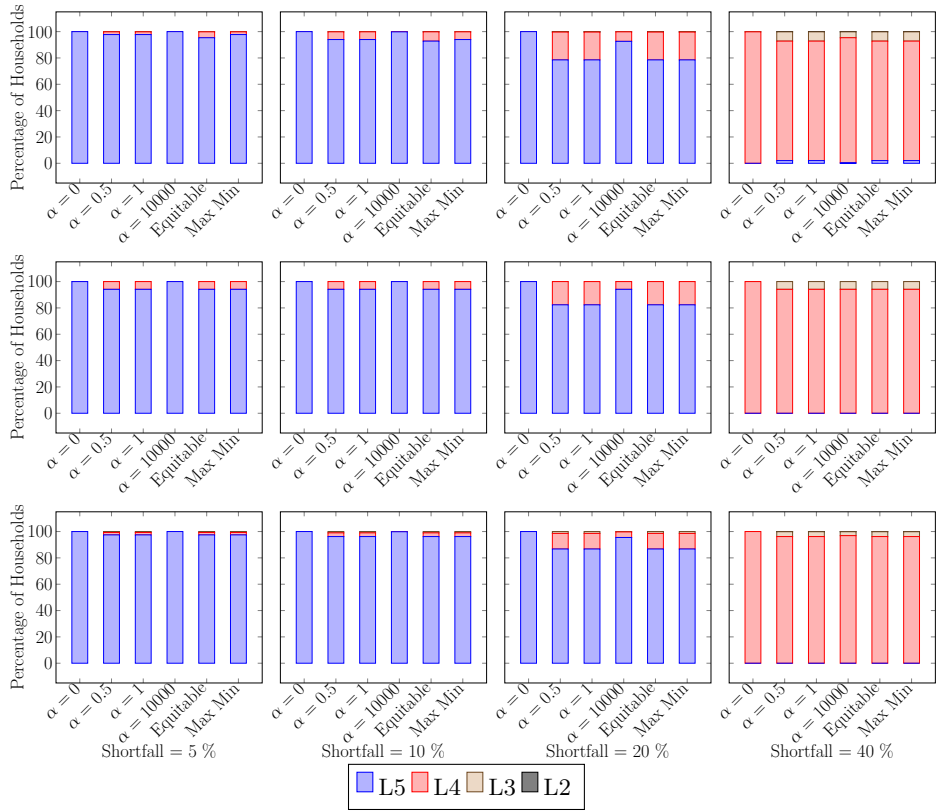
\begin{figure}[h!]
	\centering
	\begin{tikzpicture}[scale=0.40]
	\begin{axis}[
	ybar stacked,
	bar width=15pt,
	enlargelimits=0.15,
	legend columns=-1,
	legend entries={\strut L5, \strut L4, \strut L3, \strut L2},
	legend to name=CombinedLegendBar,
	ylabel={Percentage of Households},
	xlabel style={at={(0.5,-10ex)}},
	symbolic x coords={tool1, tool2, tool3, tool4, 
		tool5, tool6},
	xtick=data,
	xticklabels={$\alpha=0$,$\alpha=0.5$,$\alpha=1$,$\alpha=10000$,Equitable,Max Min},
	x tick label style={rotate=45,anchor=east},
	label style={font=\LARGE},
	tick label style={font=\LARGE},
	ymin=0,
	ymax=100,
	]
	\addplot+[ybar] plot coordinates {(tool1,100) (tool2,97.766) 
		(tool3,97.766) (tool4,100) (tool5,95.361) (tool6,97.766)};
	\addplot+[ybar] plot coordinates {(tool1,0) (tool2,2.0619) 
		(tool3,2.0619) (tool4,0) (tool5,4.4674) (tool6,2.0619)};
	\addplot+[ybar] plot coordinates {(tool1,0) (tool2,0.17182)
		(tool3,0.17182) (tool4,0) (tool5,0.17182) (tool6,0.17182)};
	\addplot+[ybar] plot coordinates {(tool1,0) (tool2,0)
		(tool3,0) (tool4,0) (tool5,0) (tool6,0)};
	\end{axis}
	\end{tikzpicture}%
	\begin{tikzpicture}[scale=0.40]
	\begin{axis}[
	ybar stacked,
	bar width=15pt,
	enlargelimits=0.15,
	legend columns=-1,
	legend entries={\strut L5, \strut L4, \strut L3, \strut L2},
	legend to name=CombinedLegendBar,
	xlabel style={at={(0.5,-10ex)}},
	symbolic x coords={tool1, tool2, tool3, tool4, 
		tool5, tool6},
	xtick=data,
	xticklabels={$\alpha=0$,$\alpha=0.5$,$\alpha=1$,$\alpha=10000$,Equitable,Max Min},
	x tick label style={rotate=45,anchor=east},
	label style={font=\LARGE},
	tick label style={font=\LARGE},
	yticklabels={},
	ymin=0,
	ymax=100,
	]
	\addplot+[ybar] plot coordinates {(tool1,100) (tool2,93.986) 
		(tool3,93.986) (tool4,99.828) (tool5,92.784) (tool6,93.986)};
	\addplot+[ybar] plot coordinates {(tool1,0) (tool2,5.8419) 
		(tool3,5.8419) (tool4,0.17182) (tool5,7.0447) (tool6,5.8419)};
	\addplot+[ybar] plot coordinates {(tool1,0) (tool2,0.17182)
		(tool3,0.17182) (tool4,0) (tool5,0) (tool6,0.17182)};
	\addplot+[ybar] plot coordinates {(tool1,0) (tool2,0)
		(tool3,0) (tool4,0) (tool5,0.17182) (tool6,0)};
	\end{axis}
	\end{tikzpicture}%
	\begin{tikzpicture}[scale=0.40]
	\begin{axis}[
	ybar stacked,
	bar width=15pt,
	enlargelimits=0.15,
	legend columns=-1,
	legend entries={\strut L5, \strut L4, \strut L3, \strut L2},
	legend to name=CombinedLegendBar,
	xlabel style={at={(0.5,-10ex)}},
	symbolic x coords={tool1, tool2, tool3, tool4, 
		tool5, tool6},
	xtick=data,
	xticklabels={$\alpha=0$,$\alpha=0.5$,$\alpha=1$,$\alpha=10000$,Equitable,Max Min},
	x tick label style={rotate=45,anchor=east},
	label style={font=\LARGE},
	tick label style={font=\LARGE},
	yticklabels={},
	ymin=0,
	ymax=100,
	]
	\addplot+[ybar] plot coordinates {(tool1,100) (tool2,78.522) 
		(tool3,78.522) (tool4,92.612) (tool5,78.522) (tool6,78.522)};
	\addplot+[ybar] plot coordinates {(tool1,0) (tool2,21.134) 
		(tool3,21.134) (tool4,7.2165) (tool5,21.134) (tool6,21.134)};
	\addplot+[ybar] plot coordinates {(tool1,0) (tool2,0.17182)
		(tool3,0.17182) (tool4,0.17182) (tool5,0.17182) (tool6,0.17182)};
	\addplot+[ybar] plot coordinates {(tool1,0) (tool2,0.17182)
		(tool3,0.17182) (tool4,0) (tool5,0.17182) (tool6,0.17182)};
	\end{axis}
	\end{tikzpicture}%
	\begin{tikzpicture}[scale=0.40]
	\begin{axis}[
	ybar stacked,
	bar width=15pt,
	enlargelimits=0.15,
	legend columns=-1,
	legend entries={\strut L5, \strut L4, \strut L3, \strut L2},
	legend to name=CombinedLegendBar,
	xlabel style={at={(0.5,-10ex)}},
	symbolic x coords={tool1, tool2, tool3, tool4, 
		tool5, tool6},
	xtick=data,
	xticklabels={$\alpha=0$,$\alpha=0.5$,$\alpha=1$,$\alpha=10000$,Equitable,Max Min},
	x tick label style={rotate=45,anchor=east},
	label style={font=\LARGE},
	tick label style={font=\LARGE},
	yticklabels={},
	ymin=0,
	ymax=100,
	]
	\addplot+[ybar] plot coordinates {(tool1,0) (tool2,2.0619) 
		(tool3,2.0619) (tool4,0.34364) (tool5,2.0619) (tool6,2.0619)};
	\addplot+[ybar] plot coordinates {(tool1,99.828) (tool2,90.722) 
		(tool3,90.722) (tool4,95.017) (tool5,90.722) (tool6,90.722)};
	\addplot+[ybar] plot coordinates {(tool1,0.17182) (tool2,7.0447)
		(tool3,7.0447) (tool4,4.4674) (tool5,7.0447) (tool6,7.0447)};
	\addplot+[ybar] plot coordinates {(tool1,0) (tool2,0.17182)
		(tool3,0.17182) (tool4,0.17182) (tool5,0.17182) (tool6,0.17182)};
	\end{axis}
	\end{tikzpicture}%
	\\
	\begin{tikzpicture}[scale=0.40]
	\begin{axis}[
	ybar stacked,
	bar width=15pt,
	enlargelimits=0.15,
	legend columns=-1,
	legend entries={\strut L5, \strut L4, \strut L3, \strut L2},
	legend to name=CombinedLegendBar,
	ylabel={Percentage of Households},
	xlabel style={at={(0.5,-10ex)}},
	symbolic x coords={tool1, tool2, tool3, tool4, 
		tool5, tool6},
	xtick=data,
	xticklabels={$\alpha=0$,$\alpha=0.5$,$\alpha=1$,$\alpha=10000$,Equitable,Max Min},
	x tick label style={rotate=45,anchor=east},
	label style={font=\LARGE},
	tick label style={font=\LARGE},
	ymin=0,
	ymax=100,
	]
	\addplot+[ybar] plot coordinates {(tool1,100) (tool2,94.118) 
		(tool3,94.118) (tool4,100) (tool5,94.118) (tool6,94.118)};
	\addplot+[ybar] plot coordinates {(tool1,0) (tool2,5.8824) 
		(tool3,5.8824) (tool4,0) (tool5,5.8824) (tool6,5.8824)};
	\addplot+[ybar] plot coordinates {(tool1,0) (tool2,0)
		(tool3,0) (tool4,0) (tool5,0) (tool6,0)};
	\addplot+[ybar] plot coordinates {(tool1,0) (tool2,0)
		(tool3,0) (tool4,0) (tool5,0) (tool6,0)};
	\end{axis}
	\end{tikzpicture}%
	\begin{tikzpicture}[scale=0.40]
	\begin{axis}[
	ybar stacked,
	bar width=15pt,
	enlargelimits=0.15,
	legend columns=-1,
	legend entries={\strut L5, \strut L4, \strut L3, \strut L2},
	legend to name=CombinedLegendBar,
	xlabel style={at={(0.5,-10ex)}},
	symbolic x coords={tool1, tool2, tool3, tool4, 
		tool5, tool6},
	xtick=data,
	xticklabels={$\alpha=0$,$\alpha=0.5$,$\alpha=1$,$\alpha=10000$,Equitable,Max Min},
	x tick label style={rotate=45,anchor=east},
	label style={font=\LARGE},
	tick label style={font=\LARGE},
	yticklabels={},
	ymin=0,
	ymax=100,
	]
	\addplot+[ybar] plot coordinates {(tool1,100) (tool2,94.118) 
		(tool3,94.118) (tool4,100) (tool5,94.118) (tool6,94.118)};
	\addplot+[ybar] plot coordinates {(tool1,0) (tool2,5.8824) 
		(tool3,5.8824) (tool4,0) (tool5,5.8824) (tool6,5.8824)};
	\addplot+[ybar] plot coordinates {(tool1,0) (tool2,0)
		(tool3,0) (tool4,0) (tool5,0) (tool6,0)};
	\addplot+[ybar] plot coordinates {(tool1,0) (tool2,0)
		(tool3,0) (tool4,0) (tool5,0) (tool6,0)};
	\end{axis}
	\end{tikzpicture}%
	\begin{tikzpicture}[scale=0.40]
	\begin{axis}[
	ybar stacked,
	bar width=15pt,
	enlargelimits=0.15,
	legend columns=-1,
	legend entries={\strut L5, \strut L4, \strut L3, \strut L2},
	legend to name=CombinedLegendBar,
	xlabel style={at={(0.5,-10ex)}},
	symbolic x coords={tool1, tool2, tool3, tool4, 
		tool5, tool6},
	xtick=data,
	xticklabels={$\alpha=0$,$\alpha=0.5$,$\alpha=1$,$\alpha=10000$,Equitable,Max Min},
	x tick label style={rotate=45,anchor=east},
	label style={font=\LARGE},
	tick label style={font=\LARGE},
	yticklabels={},
	ymin=0,
	ymax=100,
	]
	\addplot+[ybar] plot coordinates {(tool1,100) (tool2,82.353) 
		(tool3,82.353) (tool4,94.118) (tool5,82.353) (tool6,82.353)};
	\addplot+[ybar] plot coordinates {(tool1,0) (tool2,17.647) 
		(tool3,17.647) (tool4,5.8824) (tool5,17.647) (tool6,17.647)};
	\addplot+[ybar] plot coordinates {(tool1,0) (tool2,0)
		(tool3,0) (tool4,0) (tool5,0) (tool6,0)};
	\addplot+[ybar] plot coordinates {(tool1,0) (tool2,0)
		(tool3,0) (tool4,0) (tool5,0) (tool6,0)};
	\end{axis}
	\end{tikzpicture}%
	\begin{tikzpicture}[scale=0.40]
	\begin{axis}[
	ybar stacked,
	bar width=15pt,
	enlargelimits=0.15,
	legend columns=-1,
	legend entries={\strut L5, \strut L4, \strut L3, \strut L2},
	legend to name=CombinedLegendBar,
	xlabel style={at={(0.5,-10ex)}},
	symbolic x coords={tool1, tool2, tool3, tool4, 
		tool5, tool6},
	xtick=data,
	xticklabels={$\alpha=0$,$\alpha=0.5$,$\alpha=1$,$\alpha=10000$,Equitable,Max Min},
	x tick label style={rotate=45,anchor=east},
	label style={font=\LARGE},
	tick label style={font=\LARGE},
	yticklabels={},
	ymin=0,
	ymax=100,
	]
	\addplot+[ybar] plot coordinates {(tool1,0) (tool2,0) 
		(tool3,0) (tool4,0) (tool5,0) (tool6,0)};
	\addplot+[ybar] plot coordinates {(tool1,100) (tool2,94.118) 
		(tool3,94.118) (tool4,94.118) (tool5,94.118) (tool6,94.118)};
	\addplot+[ybar] plot coordinates {(tool1,0) (tool2,5.8824)
		(tool3,5.8824) (tool4,5.8824) (tool5,5.8824) (tool6,5.8824)};
	\addplot+[ybar] plot coordinates {(tool1,0) (tool2,0)
		(tool3,0) (tool4,0) (tool5,0) (tool6,0)};
	\end{axis}
	\end{tikzpicture}%
	\\
	\begin{tikzpicture}[scale=0.40]
	\begin{axis}[
	ybar stacked,
	bar width=15pt,
	enlargelimits=0.15,
	legend columns=-1,
	legend entries={\strut L5, \strut L4, \strut L3, \strut L2},
	legend to name=CombinedLegendBar,
	ylabel={Percentage of Households},
	xlabel = {Shortfall = 5 \%},
	xlabel style={at={(0.5,-10ex)}},
	symbolic x coords={tool1, tool2, tool3, tool4, 
		tool5, tool6},
	xtick=data,
	xticklabels={$\alpha=0$,$\alpha=0.5$,$\alpha=1$,$\alpha=10000$,Equitable,Max Min},
	x tick label style={rotate=45,anchor=east},
	label style={font=\LARGE},
	tick label style={font=\LARGE},
	ymin=0,
	ymax=100,
	]
	\addplot+[ybar] plot coordinates {(tool1,100) (tool2,97.461) 
		(tool3,97.461) (tool4,100) (tool5,97.461) (tool6,97.461)};
	\addplot+[ybar] plot coordinates {(tool1,0) (tool2,1.8336) 
		(tool3,1.8336) (tool4,0) (tool5,1.8336) (tool6,1.8336)};
	\addplot+[ybar] plot coordinates {(tool1,0) (tool2,0.70522)
		(tool3,0.70522) (tool4,0) (tool5,0.70522) (tool6,0.70522)};
	\addplot+[ybar] plot coordinates {(tool1,0) (tool2,0)
		(tool3,0) (tool4,0) (tool5,0) (tool6,0)};
	\end{axis}
	\end{tikzpicture}%
	\begin{tikzpicture}[scale=0.40]
	\begin{axis}[
	ybar stacked,
	bar width=15pt,
	enlargelimits=0.15,
	legend columns=-1,
	legend entries={\strut L5, \strut L4, \strut L3, \strut L2},
	legend to name=CombinedLegendBar,
	xlabel = {Shortfall = 10 \%},
	xlabel style={at={(0.5,-10ex)}},
	symbolic x coords={tool1, tool2, tool3, tool4, 
		tool5, tool6},
	xtick=data,
	xticklabels={$\alpha=0$,$\alpha=0.5$,$\alpha=1$,$\alpha=10000$,Equitable,Max Min},
	x tick label style={rotate=45,anchor=east},
	label style={font=\LARGE},
	tick label style={font=\LARGE},
	yticklabels={},
	ymin=0,
	ymax=100,
	]
	\addplot+[ybar] plot coordinates {(tool1,100) (tool2,96.192) 
		(tool3,96.192) (tool4,99.859) (tool5,96.192) (tool6,96.192)};
	\addplot+[ybar] plot coordinates {(tool1,0) (tool2,2.8209) 
		(tool3,2.8209) (tool4,0.14104) (tool5,2.8209) (tool6,2.8209)};
	\addplot+[ybar] plot coordinates {(tool1,0) (tool2,0.98731)
		(tool3,0.98731) (tool4,0) (tool5,0.98731) (tool6,0.98731)};
	\addplot+[ybar] plot coordinates {(tool1,0) (tool2,0)
		(tool3,0) (tool4,0) (tool5,0) (tool6,0)};
	\end{axis}
	\end{tikzpicture}%
	\begin{tikzpicture}[scale=0.40]
	\begin{axis}[
	ybar stacked,
	bar width=15pt,
	enlargelimits=0.15,
	legend columns=-1,
	legend entries={\strut L5, \strut L4, \strut L3, \strut L2},
	legend to name=CombinedLegendBar,
	xlabel = {Shortfall = 20 \%},
	xlabel style={at={(0.5,-10ex)}},
	symbolic x coords={tool1, tool2, tool3, tool4, 
		tool5, tool6},
	xtick=data,
	xticklabels={$\alpha=0$,$\alpha=0.5$,$\alpha=1$,$\alpha=10000$,Equitable,Max Min},
	x tick label style={rotate=45,anchor=east},
	label style={font=\LARGE},
	tick label style={font=\LARGE},
	yticklabels={},
	ymin=0,
	ymax=100,
	]
	\addplot+[ybar] plot coordinates {(tool1,100) (tool2,86.742) 
		(tool3,86.742) (tool4,95.487) (tool5,86.742) (tool6,86.742)};
	\addplot+[ybar] plot coordinates {(tool1,0) (tool2,11.848) 
		(tool3,11.848) (tool4,4.2313) (tool5,11.848) (tool6,11.848)};
	\addplot+[ybar] plot coordinates {(tool1,0) (tool2,1.4104)
		(tool3,1.4104) (tool4,0.28209) (tool5,1.4104) (tool6,1.4104)};
	\addplot+[ybar] plot coordinates {(tool1,0) (tool2,0)
		(tool3,0) (tool4,0) (tool5,0) (tool6,0)};
	\end{axis}
	\end{tikzpicture}%
	\begin{tikzpicture}[scale=0.40]
	\begin{axis}[
	ybar stacked,
	bar width=15pt,
	enlargelimits=0.15,
	legend columns=-1,
	legend entries={\strut L5, \strut L4, \strut L3, \strut L2},
	legend to name=CombinedLegendBar,
	xlabel = {Shortfall = 40 \%},
	xlabel style={at={(0.5,-10ex)}},
	symbolic x coords={tool1, tool2, tool3, tool4, 
		tool5, tool6},
	xtick=data,
	xticklabels={$\alpha=0$,$\alpha=0.5$,$\alpha=1$,$\alpha=10000$,Equitable,Max Min},
	x tick label style={rotate=45,anchor=east},
	label style={font=\LARGE},
	tick label style={font=\LARGE},
	yticklabels={},
	ymin=0,
	ymax=100,
	]
	\addplot+[ybar] plot coordinates {(tool1,0) (tool2,0) 
		(tool3,0) (tool4,0) (tool5,0) (tool6,0)};
	\addplot+[ybar] plot coordinates {(tool1,100) (tool2,96.192)
		(tool3,96.192) (tool4,96.897) (tool5,96.192) (tool6,96.192)};
	\addplot+[ybar] plot coordinates {(tool1,0) (tool2,3.6671) 
		(tool3,3.6671) (tool4,2.9619) (tool5,3.6671) (tool6,3.6671)};
	\addplot+[ybar] plot coordinates {(tool1,0) (tool2,0.14104)
		(tool3,0.14104) (tool4,0.14104) (tool5,0.14104) (tool6,0.14104)};
	\end{axis}
	\end{tikzpicture}
	\ref{CombinedLegendBar}
	\caption{Comparison of our proposed approach with max-min fair and equitable allocation on Sweden (top row), Australia (middle row), and Ireland (bottom row) dataset. Figure is best seen in color}
	\label{fig_sweden_results}
\end{figure}

To evaluate the behavior of our method on the different types of consumers, we categorize the consumers into low demand, medium demand, and high demand consumers. To categorize the consumers, a consumer with highest load (of a randomly selected day) is taken from each dataset and its load is divided by $3$ to get the thresholds for low, medium, and high category. For each category, we separately computed results (using the load of same randomly selected day as described above) to see the behavior of our algorithm. Results for different categories of consumers for Sweden dataset with the shortfall of $20 \%$ and $40\%$ are shown in Figure \ref{fig_sweden_cust_category}. We can observe that our algorithm is more inclined towards the low and medium category of consumers. As we move towards high demand consumers, the algorithm starts to place them in lower categories (i.e. $L4$). This behavior is due to the fact that since the demand for high-end consumers is very large, satisfying their needs while fulfilling the available supply constraint ($\sum_{i=1}^{N} x_{i}=S$) is very difficult.
We conclude that our algorithm favors low demand consumers more. Since the number of low demand consumers is greater than the high demand consumers, the algorithm tries to satisfy the majority (low demand consumers) and hence fulfilling the fairness criteria.
Similar behavior is observed for other datasets as well. Note that in this paper, we have studied the allocation problem for residential consumers only. Allocation for other sectors, such as industrial and commercial areas is beyond the scope of this study.

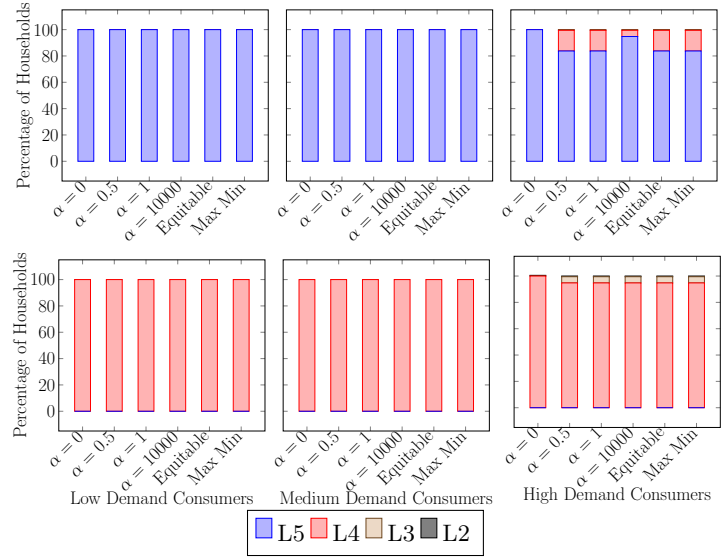
\begin{figure}[h!]
	\centering
	\begin{tikzpicture}[scale=0.40]
	\begin{axis}[
	ybar stacked,
	bar width=15pt,
	enlargelimits=0.15,
	legend columns=-1,
	legend entries={\strut L5, \strut L4, \strut L3, \strut L2},
	legend to name=CombinedLegendBar,
	ylabel={Percentage of Households},
	xlabel style={at={(0.5,-10ex)}},
	symbolic x coords={tool1, tool2, tool3, tool4, 
		tool5, tool6},
	xtick=data,
	xticklabels={$\alpha=0$,$\alpha=0.5$,$\alpha=1$,$\alpha=10000$,Equitable,Max Min},
	x tick label style={rotate=45,anchor=east},
	label style={font=\LARGE},
	tick label style={font=\LARGE},
	ymin=0,
	ymax=100,
	]
	\addplot+[ybar] plot coordinates {(tool1,100) (tool2,100) 
		(tool3,100) (tool4,100) (tool5,100) (tool6,100)};
	\addplot+[ybar] plot coordinates {(tool1,0) (tool2,0) 
		(tool3,0) (tool4,0) (tool5,0) (tool6,0)};
	\addplot+[ybar] plot coordinates {(tool1,0) (tool2,0) 
		(tool3,0) (tool4,0) (tool5,0) (tool6,0)};
	\addplot+[ybar] plot coordinates {(tool1,0) (tool2,0) 
		(tool3,0) (tool4,0) (tool5,0) (tool6,0)};
	\end{axis}
	\end{tikzpicture}%
	\begin{tikzpicture}[scale=0.40]
	\begin{axis}[
	ybar stacked,
	bar width=15pt,
	enlargelimits=0.15,
	legend columns=-1,
	legend entries={\strut L5, \strut L4, \strut L3, \strut L2},
	legend to name=CombinedLegendBar,
	xlabel style={at={(0.5,-10ex)}},
	symbolic x coords={tool1, tool2, tool3, tool4, 
		tool5, tool6},
	xtick=data,
	xticklabels={$\alpha=0$,$\alpha=0.5$,$\alpha=1$,$\alpha=10000$,Equitable,Max Min},
	x tick label style={rotate=45,anchor=east},
	label style={font=\LARGE},
	tick label style={font=\LARGE},
	yticklabels={},
	ymin=0,
	ymax=100,
	]
	\addplot+[ybar] plot coordinates {(tool1,100) (tool2,100) 
		(tool3,100) (tool4,100) (tool5,100) (tool6,100)};
	\addplot+[ybar] plot coordinates {(tool1,0) (tool2,0) 
		(tool3,0) (tool4,0) (tool5,0) (tool6,0)};
	\addplot+[ybar] plot coordinates {(tool1,0) (tool2,0) 
		(tool3,0) (tool4,0) (tool5,0) (tool6,0)};
	\addplot+[ybar] plot coordinates {(tool1,0) (tool2,0) 
		(tool3,0) (tool4,0) (tool5,0) (tool6,0)};
	\end{axis}
	\end{tikzpicture}%
	\begin{tikzpicture}[scale=0.40]
	\begin{axis}[
	ybar stacked,
	bar width=15pt,
	enlargelimits=0.15,
	legend columns=-1,
	legend entries={\strut L5, \strut L4, \strut L3, \strut L2},
	legend to name=CombinedLegendBar,
	xlabel style={at={(0.5,-10ex)}},
	symbolic x coords={tool1, tool2, tool3, tool4, 
		tool5, tool6},
	xtick=data,
	xticklabels={$\alpha=0$,$\alpha=0.5$,$\alpha=1$,$\alpha=10000$,Equitable,Max Min},
	x tick label style={rotate=45,anchor=east},
	label style={font=\LARGE},
	tick label style={font=\LARGE},
	yticklabels={},
	ymin=0,
	ymax=100,
	]
	\addplot+[ybar] plot coordinates {(tool1,100) (tool2,83.77) 
		(tool3,83.77) (tool4,94.764) (tool5,83.77) (tool6,83.77)};
	\addplot+[ybar] plot coordinates {(tool1,0) (tool2,15.707) 
		(tool3,15.707) (tool4,4.712) (tool5,15.707) (tool6,15.707)};
	\addplot+[ybar] plot coordinates {(tool1,0) (tool2,0.52356)
		(tool3,0.52356) (tool4,0.52356) (tool5,0.52356) (tool6,0.52356)};
	\addplot+[ybar] plot coordinates {(tool1,0) (tool2,0)
		(tool3,0) (tool4,0) (tool5,0) (tool6,0)};
	\end{axis}
	\end{tikzpicture}%
	\\
	\begin{tikzpicture}[scale=0.40]
	\begin{axis}[
	ybar stacked,
	bar width=15pt,
	enlargelimits=0.15,
	legend columns=-1,
	legend entries={\strut L5, \strut L4, \strut L3, \strut L2},
	legend to name=CombinedLegendBar,
	ylabel={Percentage of Households},
	xlabel = {Low Demand Consumers},
	xlabel style={at={(0.5,-10ex)}},
	symbolic x coords={tool1, tool2, tool3, tool4, 
		tool5, tool6},
	xtick=data,
	xticklabels={$\alpha=0$,$\alpha=0.5$,$\alpha=1$,$\alpha=10000$,Equitable,Max Min},
	x tick label style={rotate=45,anchor=east},
	label style={font=\LARGE},
	tick label style={font=\LARGE},
	ymin=0,
	ymax=100,
	]
	\addplot+[ybar] plot coordinates {(tool1,0) (tool2,0) 
		(tool3,0) (tool4,0) (tool5,0) (tool6,0)};
	\addplot+[ybar] plot coordinates {(tool1,100) (tool2,100) 
		(tool3,100) (tool4,100) (tool5,100) (tool6,100)};
	\addplot+[ybar] plot coordinates {(tool1,0) (tool2,0) 
		(tool3,0) (tool4,0) (tool5,0) (tool6,0)};
	\addplot+[ybar] plot coordinates {(tool1,0) (tool2,0) 
		(tool3,0) (tool4,0) (tool5,0) (tool6,0)};
	\end{axis}
	\end{tikzpicture}%
	\begin{tikzpicture}[scale=0.40]
	\begin{axis}[
	ybar stacked,
	bar width=15pt,
	enlargelimits=0.15,
	legend columns=-1,
	legend entries={\strut L5, \strut L4, \strut L3, \strut L2},
	legend to name=CombinedLegendBar,
	xlabel = {Medium Demand Consumers},
	xlabel style={at={(0.5,-10ex)}},
	symbolic x coords={tool1, tool2, tool3, tool4, 
		tool5, tool6},
	xtick=data,
	xticklabels={$\alpha=0$,$\alpha=0.5$,$\alpha=1$,$\alpha=10000$,Equitable,Max Min},
	x tick label style={rotate=45,anchor=east},
	label style={font=\LARGE},
	tick label style={font=\LARGE},
	yticklabels={},
	ymin=0,
	ymax=100,
	]
	\addplot+[ybar] plot coordinates {(tool1,0) (tool2,0) 
		(tool3,0) (tool4,0) (tool5,0) (tool6,0)};
	\addplot+[ybar] plot coordinates {(tool1,100) (tool2,100) 
		(tool3,100) (tool4,100) (tool5,100) (tool6,100)};
	\addplot+[ybar] plot coordinates {(tool1,0) (tool2,0) 
		(tool3,0) (tool4,0) (tool5,0) (tool6,0)};
	\addplot+[ybar] plot coordinates {(tool1,0) (tool2,0) 
		(tool3,0) (tool4,0) (tool5,0) (tool6,0)};
	\end{axis}
	\end{tikzpicture}%
	\begin{tikzpicture}[scale=0.40]
	\begin{axis}[
	ybar stacked,
	bar width=15pt,
	enlargelimits=0.15,
	legend columns=-1,
	legend entries={\strut L5, \strut L4, \strut L3, \strut L2},
	legend to name=CombinedLegendBar,
	xlabel = {High Demand Consumers},
	xlabel style={at={(0.5,-10ex)}},
	symbolic x coords={tool1, tool2, tool3, tool4, 
		tool5, tool6},
	xtick=data,
	xticklabels={$\alpha=0$,$\alpha=0.5$,$\alpha=1$,$\alpha=10000$,Equitable,Max Min},
	x tick label style={rotate=45,anchor=east},
	label style={font=\LARGE},
	tick label style={font=\LARGE},
	yticklabels={},
	ymin=0,
	ymax=100,
	]
	\addplot+[ybar] plot coordinates {(tool1,0) (tool2,0) 
		(tool3,0) (tool4,0) (tool5,0) (tool6,0)};
	\addplot+[ybar] plot coordinates {(tool1,100) (tool2,94.764) 
		(tool3,94.764) (tool4,94.764) (tool5,94.764) (tool6,94.764)};
	\addplot+[ybar] plot coordinates {(tool1,0) (tool2,4.721)
		(tool3,4.721) (tool4,4.721) (tool5,4.721) (tool6,4.721)};
	\addplot+[ybar] plot coordinates {(tool1,0.52356) (tool2,0.52356)
		(tool3,0.52356) (tool4,0.52356) (tool5,0.52356) (tool6,0.52356)};
	\end{axis}
	\end{tikzpicture}
	\ref{CombinedLegendBar}
	\caption{Comparison of different categories of consumers in Sweden dataset with shortfall of $20 \%$ (top row) and $40 \%$ (bottom row). Figure is best seen in color}
	\label{fig_sweden_cust_category}
\end{figure}

Figure \ref{fig_comparison_aus} shows the comparison of the actual load with our proposed approach and the baselines for all households of Australia dataset on a randomly selected day (after aggregating the $24$ hours of that day). We can see that the allocation using the alpha fair approach is very close to the original load. The max min is the second best in terms of allocation, while the equitable give a straight line, which is the most inefficient approach. Same behavior is observed for other datasets as well (their results are not shown because of space constraints).
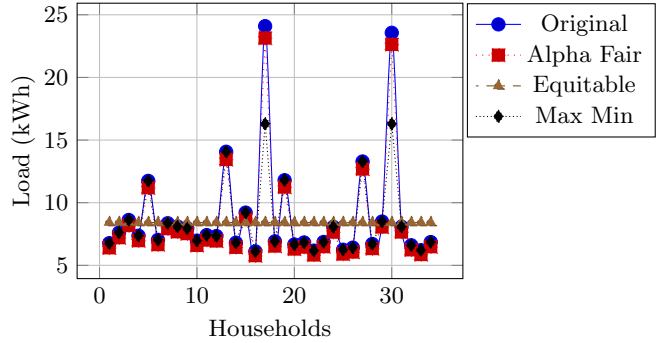
\begin{figure}[h!]
	\centering
	\begin{tikzpicture}[scale=1]
	\begin{axis}[title={},
	compat=newest,
	xlabel style={text width=3.5cm, align=center},
	xlabel={{\small Households}},
	ylabel={Load (kWh)}, 
	ylabel shift={-3pt},
	height=0.43\columnwidth, width=0.55\columnwidth, grid=major,
	legend columns=1,
	legend style ={ at={(1.5,1)}},
	legend entries={Original, Alpha Fair, Equitable, Max Min},
	]
	\addplot+[
	mark size=2.5pt,
	smooth,
	error bars/.cd,
	y fixed,
	y dir=both,
	y explicit
	] table [x={x}, y={Original}, col sep=comma] {actual_values_shortfall_5_percent_alpha_zero_aus.csv};
	\addplot+[
	mark size=2.5pt,
	dotted,
	mark=square*,
	error bars/.cd,
	y fixed,
	y dir=both,
	y explicit
	] table [x={x}, y={Alpha}, col sep=comma] {actual_values_shortfall_5_percent_alpha_zero_aus.csv};
	\addplot+[
	mark size=2.5pt,
	dash pattern=on 1pt off 3pt on 3pt off 3pt,
	mark=triangle*,
	error bars/.cd,
	y fixed,
	y dir=both,
	y explicit
	] table [x={x}, y={Equitable}, col sep=comma] {actual_values_shortfall_5_percent_alpha_zero_aus.csv};
	\addplot+[
	mark size=2.5pt,
	densely dotted,
	mark=diamond*,
	error bars/.cd,
	y fixed,
	y dir=both,
	y explicit
	] table [x={x}, y={MaxMin}, col sep=comma] {actual_values_shortfall_5_percent_alpha_zero_aus.csv};
	\end{axis}
	\end{tikzpicture}%
	\caption{Comparison of Actual load vs. the allocated load using different techniques for all households of Australia dataset for a randomly selected day. Figure is best seen in color}
	\label{fig_comparison_aus}
\end{figure}

The revenue generated for real world datasets and their comparison with the baselines are shown in Table \ref{table_revenue_australia}, \ref{table_revenue_sweden}, and \ref{table_revenue_irish}. It can be seen that the generated revenue increases as we increase the supply $S$. This behavior is obvious as large supply yields large revenue. As we increase the block rate threshold $b$ from $10^{th}$ to $90^{th}$ percentile, the revenue generated decreases. Since $p_2>p_1$ and as the value of $b$ increases, more units falls under $p_1$ block which produces less revenue. Also note that in comparison with the baselines (equitable and max-min), our approach generates more revenue while also fulfilling the fair distribution criteria. This high revenue generation highlights the effectiveness of our proposed approach.
\begin{table}[h!]
	\scriptsize
	\centering
	\begin{tabular}{@{\extracolsep{4pt}}p{1.2cm}p{0.67cm}p{0.67cm}p{0.67cm}p{0.67cm}p{0.67cm}p{0.67cm}p{0.67cm}p{0.67cm}p{0.67cm}p{0.67cm}p{0.67cm}p{0.67cm}p{0.67cm}@{}}
		
		\toprule
		\multirow{2}{*}{$\alpha$} &
		\multicolumn{4}{c}{$b=10^{th}$ } &
		\multicolumn{4}{c}{$b=50^{th}$ } & 
		\multicolumn{4}{c}{$b=90^{th}$ }
		\\
		\cline{2-5} \cline{6-9} \cline{10-13} 
		& {S=10\%} & {30\%} & {60\%} & {90\%} & {S=10\%} & {30\%} & {60\%} & {90\%} & {S=10\%} & {30\%} & {60\%} & {90\%} \\
		\midrule
		
		0 & 3512.8 & 3280.8 & 2805.9 & 1946.4 & 3395.7 & 3186.3 & 2753.6 & 1914.2 & 3151.9 & 2971.3 & 2582.5 & 1839.2 \\
		0.5 & 3467.6 & 3166.3 & 2563.6 & 1808 & 3323.1 & 3021.8 & 2419.1 & 1808 & 3043.1 & 2741.8 & 2410.6 & 1808 \\
		1 & 3467.6 & 3166.3 & 2563.6 & 1808 & 3323.1 & 3021.8 & 2419.1 & 1808 & 3043.1 & 2741.8 & 2410.6 & 1808 \\
		10000 & 3509.6 & 3232.2 & 2669.7 & 1808 & 3392 & 3124.2 & 2586 & 1808 & 3147.6 & 2902.6 & 2424.9 & 1808 \\
		\hline
		Equitable & 3441.5 & 	3140.1 & 	2537.5 & 	1808 & 	3188 & 	2886.7 & 	2410.6 & 	1808 & 	2862.6 & 	2712 & 	2410.6 & 	1808 \\
		Max-Min  & 3467.6 & 	3166.3 & 	2563.6	 & 1808	 & 3323.1 & 	3021.8 & 	2419.1 & 	1808 & 	3043.1	 & 2741.8 & 	2410.6	 & 1808\\
		
		\bottomrule
	\end{tabular}
	\caption{Revenue generated using Australia dataset with changing value of $\alpha$, shortfall, and  block threshold $b$}
	\label{table_revenue_australia}
\end{table}
\begin{table}[h!]
	\scriptsize
	\centering
	\begin{tabular}{@{\extracolsep{4pt}}p{1.2cm}p{0.67cm}p{0.67cm}p{0.67cm}p{0.67cm}p{0.67cm}p{0.67cm}p{0.67cm}p{0.67cm}p{0.67cm}p{0.67cm}p{0.67cm}p{0.67cm}p{0.67cm}@{}}
		
		\toprule
		\multirow{2}{*}{$\alpha$} &
		\multicolumn{4}{c}{$b=10^{th}$ } &
		\multicolumn{4}{c}{$b=50^{th}$ } & 
		\multicolumn{4}{c}{$b=90^{th}$ }
		\\
		\cline{2-5} \cline{6-9} \cline{10-13} 
		& {S=10\%} & {30\%} & {60\%} & {90\%} & {S=10\%} & {30\%} & {60\%} & {90\%} & {S=10\%} & {30\%} & {60\%} & {90\%} \\
		\midrule
		
		0 & 155920 & 145280 & 123990 & 81626 & 134180 & 123980 & 104020 & 68307 & 111120 & 103640 & 89662 & 64696 \\
		0.5 & 155890 & 145210 & 123840 & 81110 & 133780 & 123100 & 101730 & 64096 & 107990 & 97303 & 85462 & 64096 \\
		1 & 155890 & 145210 & 123840 & 81110 & 133780 & 123100 & 101730 & 64096 & 107990 & 97303 & 85462 & 64096 \\
		10000 & 155900 & 145230 & 123870 & 81169 & 134070 & 123530 & 102460 & 64207 & 110840 & 102040 & 86001 & 64096 \\
		\hline
		Equitable & 155830 & 145150 & 123780 & 81050 & 133130 & 122450 & 101080 & 64096 & 101490 & 96144 & 85462 & 64096 
		\\
		Max-Min & 155890 & 145210 & 123840 & 81110 & 133780 & 123100 & 101730 & 64096 & 107990 & 97303 & 85462 & 64096 \\
		\bottomrule
	\end{tabular}
	\caption{Revenue generated using Sweden dataset with changing value of $\alpha$, shortfall, and  block threshold $b$}
	\label{table_revenue_sweden}
\end{table}
Recall that increasing $\alpha$ favors smaller allocations. We can notice this effect as we increase $\alpha$, the revenue generated decreases. However, due to single source, the proportional fair ($\alpha=1$) is equal to max-min fair ($\alpha=\infty$) \cite{kelly1997charging}. 
The proportional fair gives priority to smaller allocation but less emphatically. As we follow block rate pricing, favoring smaller allocation will not give a high profit. That is why the revenue generated for $\alpha=1$ is less than the revenue for $\alpha=1000$ in most cases. The maximum revenue is generated for $\alpha=0$.

\begin{table}[h!]
	\scriptsize
	\centering
	\begin{tabular}{@{\extracolsep{4pt}}p{1.2cm}p{0.67cm}p{0.67cm}p{0.67cm}p{0.67cm}p{0.67cm}p{0.67cm}p{0.67cm}p{0.67cm}p{0.67cm}p{0.67cm}p{0.67cm}p{0.67cm}p{0.67cm}@{}}
		
		\toprule
		\multirow{2}{*}{$\alpha$} &
		\multicolumn{4}{c}{$b=10^{th}$ } &
		\multicolumn{4}{c}{$b=50^{th}$ } & 
		\multicolumn{4}{c}{$b=90^{th}$ }
		\\
		\cline{2-5} \cline{6-9} \cline{10-13} 
		& {S=10\%} & {30\%} & {60\%} & {90\%} & {S=10\%} & {30\%} & {60\%} & {90\%} & {S=10\%} & {30\%} & {60\%} & {90\%} \\
		\midrule
		
		0 & 77210 & 71284 & 60283 & 40917 & 75077 & 69516 & 59102 & 40532 & 67065 & 62959 & 54875 & 39301 \\
		0.5 & 77034 & 70687 & 57994 & 38080 & 74601 & 68254 & 55561 & 38080 & 64775 & 58429 & 50773 & 38080 \\
		1 & 77034 & 70687 & 57994 & 38080 & 74601 & 68254 & 55561 & 38080 & 64775 & 58429 & 50773 & 38080 \\
		10000 & 77166 & 70919 & 58529 & 38083 & 75000 & 68870 & 56808 & 38080 & 66947 & 61566 & 51442 & 38080 \\
		\hline
		Equitable & 77025 & 70679 & 57985 & 38080 & 74430 & 68084 & 55390 & 38080 & 60293 & 57120 & 50773 & 38080 \\
		Max-Min & 77034 & 70687 & 57994 & 38080 & 74601 & 68254 & 55561 & 38080 & 64775 & 58429 & 50773 & 38080 \\
		\bottomrule
	\end{tabular}
	\caption{Revenue generated using Ireland dataset with changing value of $\alpha$, shortfall, and  block threshold $b$}
	\label{table_revenue_irish}
\end{table}

\section{Conclusion and Future Work} \label{Section_Conclusion}
In this paper, we propose an alternative to the load shedding scheme called SLS. We mathematically formulate the problem of SLS as a social welfare optimization problem. Solving the SLS problem is equivalent to maximizing the net utility of each consumer. We introduce a parametric notion of fairness called $\alpha$-fair, where $\alpha \in [0,\infty]$. As we increase $\alpha$, smaller demands are given more preference. When the total supply $S$ is too large or too small, different values of $alpha$ turns in similar allocations. For single supply unit, proportional fair ($\alpha = 1$)  is equal to max-min fair ($\alpha = \infty$). We consider block rate pricing to study the effect of fairness on revenue generation. As we increase $\alpha$, the generated revenue decreases. The decision-makers can tune the value of $\alpha$ according to different scenarios of fairness and revenue generation. We compare our approach with two baselines and show that our proposed method is better in terms of fair allocation and revenue generation.
As the utility companies have to deal with multi-million customers, one of the future direction is to compute the closed-form solution of the problem or at least a parallel solution. Another future direction is to formulate the SLS problem, which can incorporate the distributed electricity generations of the smart grid.

\bibliographystyle{unsrt}
\bibliography{Fair_Allocation_Based_Soft_Load_Shedding}

\begin{thebibliography}{10}

\bibitem{cosic2011towards}
Boris Ćosić, Goran Krajacic, and Neven Duic.
\newblock Towards 100
\newblock In {\em Conference on Sustainable Development of Energy, Water and
  Environment Systems}, 2011.

\bibitem{article}
D~Connolly, Henrik Lund, Brian Mathiesen, and Martin Leahy.
\newblock The first step towards a 100
\newblock {\em Applied Energy}, 88:502--507, 2011.

\bibitem{lusis2017short}
Peter Lusis, Kaveh~Rajab Khalilpour, Lachlan Andrew, and Ariel Liebman.
\newblock Short-term residential load forecasting: Impact of calendar effects
  and forecast granularity.
\newblock {\em Applied energy}, 205:654--669, 2017.

\bibitem{vardakas2015survey}
John~S Vardakas, Nizar Zorba, and Christos~V Verikoukis.
\newblock A survey on demand response programs in smart grids: Pricing methods
  and optimization algorithms.
\newblock {\em IEEE Communications Surveys \& Tutorials}, 17(1):152--178, 2015.

\bibitem{chen2010two}
Lijun Chen, Na~Li, Steven~H Low, and John~C Doyle.
\newblock Two market models for demand response in power networks.
\newblock In {\em International Conference on Smart Grid Communications}, pages
  397--402, 2010.

\bibitem{dutta2017literature}
Goutam Dutta and Krishnendranath Mitra.
\newblock A literature review on dynamic pricing of electricity.
\newblock {\em Journal of the Operational Research Society}, 68(10):1131--1145,
  2017.

\bibitem{roberts2011role}
Bradford~P Roberts and Chet Sandberg.
\newblock The role of energy storage in development of smart grids.
\newblock {\em Proceedings of the IEEE}, 99(6):1139--1144, 2011.

\bibitem{mohd2008challenges}
Alaa Mohd, Egon Ortjohann, Andreas Schmelter, Nedzad Hamsic, and Danny Morton.
\newblock Challenges in integrating distributed energy storage systems into
  future smart grid.
\newblock In {\em International symposium on industrial electronics}, pages
  1627--1632, 2008.

\bibitem{aalami2010demand}
HA~Aalami, M~Parsa Moghaddam, and GR~Yousefi.
\newblock Demand response modeling considering interruptible/curtailable loads
  and capacity market programs.
\newblock {\em Applied Energy}, 87(1):243--250, 2010.

\bibitem{craciun2009new}
D.~{Craciun}, S.~{Ichim}, and Y.~{Besanger}.
\newblock A new soft load shedding: Power system stability with contribution
  from consumers.
\newblock In {\em IEEE Bucharest PowerTech}, pages 1--6, 2009.

\bibitem{aslam2018soft}
Tayyab Aslam and Naveed Arshad.
\newblock Soft load shedding: An efficient approach to manage electricity
  demand in a renewable rich distribution system.
\newblock In {\em International Conference on Smart Cities and Green {ICT}
  Systems, {SMARTGREENS}}, pages 101--107, 2018.

\bibitem{ali2019short}
Sarwan Ali, Haris Mansoor, Naveed Arshad, and Imdadullah Khan.
\newblock Short term load forecasting using smart meter data.
\newblock In {\em International Conference on Future Energy Systems
  (e-Energy)}, pages 419--421, 2019.

\bibitem{ali2019hour}
Sarwan Ali, Haris Mansoor, Imdadullah Khan, Naveed Arshad, M.~Asad Khan, and
  Safiullah Faizullah.
\newblock Hour-ahead load forecasting using ami data.
\newblock In {\em arXiv preprint arXiv:1504.06975}, 2019.

\bibitem{shi2018deep}
Heng Shi, Minghao Xu, and Ran Li.
\newblock Deep learning for household load forecasting—a novel pooling deep
  rnn.
\newblock {\em IEEE Transactions on Smart Grid}, 9(5):5271--5280, 2018.

\bibitem{gerossier2017probabilistic}
Alexis Gerossier, Robin Girard, George Kariniotakis, and Andrea Michiorri.
\newblock Probabilistic day-ahead forecasting of household electricity demand.
\newblock {\em CIRED-Open Access Proceedings Journal}, 2017(1):2500--2504,
  2017.

\bibitem{kell2018segmenting}
Alexander Kell, A~Stephen McGough, and Matthew Forshaw.
\newblock Segmenting residential smart meter data for short-term load
  forecasting.
\newblock In {\em International Conference on Future Energy Systems}, pages
  91--96, 2018.

\bibitem{walsh2015challenges}
Toby Walsh.
\newblock Challenges in resource and cost allocation.
\newblock In {\em AAAI Conference on Artificial Intelligence}, pages
  4073--4077, 2015.

\bibitem{lan2010axiomatic}
Tian Lan, David Kao, Mung Chiang, and Ashutosh Sabharwal.
\newblock {\em An axiomatic theory of fairness in network resource allocation}.
\newblock 2010.

\bibitem{jain1984quantitative}
Rajendra~K Jain, Dah-Ming~W Chiu, and William~R Hawe.
\newblock A quantitative measure of fairness and discrimination.
\newblock {\em Eastern Research Laboratory, Digital Equipment Corporation},
  pages 2--7, 1984.

\bibitem{bredel2009understanding}
Michael Bredel and Markus Fidler.
\newblock Understanding fairness and its impact on quality of service in ieee
  802.11.
\newblock In {\em International Conference on Computer Communications
  {INFOCOM}}, pages 1098--1106, 2009.

\bibitem{bonald2001impact}
Thomas Bonald and Laurent Massouli{\'e}.
\newblock Impact of fairness on internet performance.
\newblock In {\em SIGMETRICS Performance Evaluation Review}, volume~29, pages
  82--91, 2001.

\bibitem{kelly1997charging}
Frank Kelly.
\newblock Charging and rate control for elastic traffic.
\newblock {\em European Transactions on Telecommunications}, 8(1):33--37, 1997.

\bibitem{aleksandrov2015online}
Martin~Damyanov Aleksandrov, Haris Aziz, Serge Gaspers, and Toby Walsh.
\newblock Online fair division: analysing a food bank problem.
\newblock In {\em International Joint Conference on Artificial Intelligence
  (IJCAI)}, pages 2540--2546, 2015.

\bibitem{toby2015}
Toby Walsh.
\newblock Challenges in resource and cost allocation.
\newblock In {\em AAAI Conference on Artificial Intelligence}, pages
  4073--4077, 2015.

\bibitem{chiang2012networked}
Mung Chiang.
\newblock {\em Networked Life: 20 Questions and Answers}.
\newblock Cambridge University Press, 2012.

\bibitem{mansoor2020dr}
Haris Mansoor, Naveed Arshad, and Imdadullah Khan.
\newblock Market model for demand response under block rate pricing.
\newblock In {\em Preprint}, 2020.

\bibitem{chandan2014idr}
Vikas Chandan, Tanuja Ganu, Tri~Kurniawan Wijaya, Marilena Minou, George
  Stamoulis, George Thanos, and Deva~P Seetharam.
\newblock idr: Consumer and grid friendly demand response system.
\newblock In {\em International conference on Future energy systems}, pages
  183--194, 2014.

\bibitem{bashir2015delivering}
Noman Bashir, Zohaib Sharani, Khushboo Qayyum, and Affan~A Syed.
\newblock Delivering smart load-shedding for highly-stressed grids.
\newblock In {\em International Conference on Smart Grid Communications
  (SmartGridComm)}, pages 852--858, 2015.

\bibitem{liu2016queuing}
Yi~Liu, Chau Yuen, Rong Yu, Yan Zhang, and Shengli Xie.
\newblock Queuing-based energy consumption management for heterogeneous
  residential demands in smart grid.
\newblock {\em IEEE Transactions on Smart Grid}, 7(3):1650--1659, 2016.

\bibitem{bashir2015aashiyana}
Noman Bashir, Zohaib Sharani, Khushboo Qayyum, and Affan~A Syed.
\newblock Aashiyana: Design and evaluation of a smart demand-response system
  for highly-stressed grids.
\newblock {\em arXiv preprint arXiv:1504.06975}, 2015.

\bibitem{oluwasuji2018algorithms}
Olabambo~I Oluwasuji, Obaid Malik, Jie Zhang, Sarvapali~D Ramchurn, et~al.
\newblock Algorithms for fair load shedding in developing countries.
\newblock In {\em International Joint Conferences on Artificial Intelligence
  (IJCAI)}, pages 1590--1596, 2018.

\bibitem{buzna2017controlling}
L'ubo{\v{s}} Buzna and Rui Carvalho.
\newblock Controlling congestion on complex networks: fairness, efficiency and
  network structure.
\newblock {\em Scientific reports}, 7(1):9152, 2017.

\bibitem{altman2008generalized}
Eitan Altman, Konstantin Avrachenkov, and Andrey Garnaev.
\newblock Generalized $\alpha$-fair resource allocation in wireless networks.
\newblock In {\em IEEE Conference on Decision and Control}, pages 2414--2419,
  2008.

\bibitem{javed2012forecasting}
Fahad Javed, Naveed Arshad, Fredrik Wallin, Iana Vassileva, and Erik Dahlquist.
\newblock Forecasting for demand response in smart grids: An analysis on use of
  anthropologic and structural data and short term multiple loads forecasting.
\newblock {\em Applied Energy}, 96:150--160, 2012.

\bibitem{irish_dataset}
Commission for Energy Regualtion~(CER).
\newblock Cer smart metering project - electricity customer behaviour trail,
  2009-2010 [dataset].
  "\url{https://www.ucd.ie/issda/data/commissionforenergyregulationcer}", 2012.

\bibitem{le2005rate}
Jean-Yves Le~Boudec.
\newblock Rate adaptation, congestion control and fairness: A tutorial.
\newblock {\em Web page, November}, 2005.

\end{thebibliography}

\end{document}